\theoremstyle{plain}
\newtheorem{theorem}{Theorem}[section]
\newtheorem{lemma}[theorem]{Lemma}
\theoremstyle{definition}
\newtheorem{remark}{Remark}[section]
\theoremstyle{remark}
\definecolor{cm}{RGB}{0,0,200}
\definecolor{purple}{RGB}{200,0,200}
\def\cs{{\mathcal S}}
\def\ct{{\mathcal T}}
\def\ce{{\mathcal E}}
\def\ca{{\mathcal A}}
\def\cb{{\mathcal B}}
\def\cx{{\mathcal X}}
\def\cy{{\mathcal Y}}
\def\co{{\mathcal O}}
\DeclareMathOperator\E{\mathbb{E}}
\let\P\relax
\DeclareMathOperator\P{\mathsf{P}}
\def\DKL{{D_\mathsf{KL}(p\|1-p)}}
\newcommand{\cond}{\mathchoice{\,\vert\,}{\mspace{2mu}\vert\mspace{2mu}}{\vert}{\vert}}
\newcommand{\snc}{\textsc{NoisyCompare}}
\newcommand{\sno}{\textsc{NoisyOR }}
\newcommand{\scb}{\textsc{CheckBit }}
\newcommand{\stourmax}{\textsc{Tournament\_Max }}
\newcommand{\stouror}{\textsc{Tournament\_OR }}
\newcommand{\orn}{\ensuremath{\mathsf{OR}} }
\newcommand{\maxn}{\ensuremath{\mathsf{MAX}} }
\DeclareMathOperator*{\argmax}{arg\,max}
\title{Noisy Computing of the \orn \\ and \maxn Functions}
\author{Banghua Zhu$^*$,
\and Ziao Wang$^*$,
\and Nadim Ghaddar$^*$,
\and Jiantao Jiao,
\and Lele Wang
\thanks{$*$ Banghua Zhu, Ziao Wang and Nadim Ghaddar contributed equally to this work.}
\thanks{Banghua Zhu is with the Department of Electrical Engineering and Computer Sciences, University of California Berkeley, Berkeley, CA 94720, USA, (email: banghua@berkeley.edu).}
\thanks{Ziao Wang is with the Department of Electrical and Computer Engineering, University of British Columbia, Vancouver, BC V6T1Z4, Canada (email: ziaow@ece.ubc.ca).}
\thanks{Nadim Ghaddar is with the Department of Electrical and Computer Engineering, University of California San Diego, La Jolla, CA 92093, USA, (email: nghaddar@ucsd.edu).}
\thanks{Jiantao Jiao is with the Department of Electrical Engineering and Computer Sciences, University of California Berkeley, Berkeley, CA 94720, USA, (email: jiantao@eecs.berkeley.edu).}
\thanks{Lele Wang is with the Department of Electrical and Computer Engineering, University of British Columbia, Vancouver, BC V6T1Z4, Canada (email: lelewang@ece.ubc.ca).}
}
\begin{document}
\maketitle

\begin{abstract}
We consider the problem of computing a function of $n$ variables using noisy queries, where each query is incorrect with some fixed and known probability $p \in (0,1/2)$. Specifically, we consider the computation of the $\mathsf{OR}$ function of $n$ bits (where queries correspond to noisy readings of the bits) and the $\mathsf{MAX}$ function of $n$ real numbers (where queries correspond to noisy pairwise comparisons). We show that an expected number of queries of
\[
(1 \pm o(1)) \frac{n\log \frac{1}{\delta}}{D_{\mathsf{KL}}(p \| 1-p)}
\]
is both sufficient and necessary to compute both functions with a vanishing error probability $\delta = o(1)$, where $D_{\mathsf{KL}}(p \| 1-p)$ denotes the Kullback-Leibler divergence between $\mathsf{Bern}(p)$ and $\mathsf{Bern}(1-p)$ distributions. Compared to previous work, our results tighten the dependence on $p$ in both the upper and lower bounds for the two functions.
\end{abstract}

\section{Introduction}
Coping with noise in computing is an important problem to consider in large systems. With applications in fault tolerance~\citep{Hastad1987,Pease1980,pippenger1991lower}, active ranking~\citep{Shah2018,Agarwal2017,Falahatgar2017,Heckel2019,wang2022noisy,Gu-Xu2023}, noisy searching~\citep{Berlekamp1964,Horstein1963,Burnashev1974,pelc1989searching, Karp2007}, among many others, the goal is to devise algorithms that are robust enough to detect and correct the errors that can happen during the computation. More concretely, the problem can be defined as follows: suppose an agent is interested in computing a function $f$ of $n$ variables with an error probability at most $\delta$, as quickly as possible. To this end, the agent can ask binary questions (referred to hereafter as \emph{queries}) about the variables in hand. The binary response to the queries is observed by the agent through a binary symmetric channel (BSC) with crossover probability $p$. The agent can adaptively design subsequent queries based on the responses to previous queries. The goal is to characterize the relation between $n$, $\delta$, $p$, and the query complexity, which is defined as the minimal number of queries that are needed by the agent to meet the intended goal of computing the function $f$.

This paper considers the computation of the \orn function of $n$ bits and the \maxn function of $n$ real numbers. For the computation of the \orn function, the noisy queries correspond to noisy readings of the bits, where at each timestep, the agent queries one of the bits, and with probability $p$, the wrong value of the bit is returned. For the computation of the \maxn function, the noisy queries correspond to noisy pairwise comparisons, where at each timestep, the agent asks to compare two numbers in the given sequence, and with probability $p$, the incorrect result of the comparison is returned. 

This model for the \orn and \maxn functions has been previously considered in~\citep{feige1994computing}, where upper and lower bounds on the number of queries that are needed to compute the two functions are derived in terms of the number of variables $n$, the noise probability $p$ and the desired error probability $\delta$. The upper and lower bounds in~\citep{feige1994computing} are tight only in terms of dependence on $n$; however, the exact dependence of the query complexity on the parameters $p$ and $\delta$ is not very clear. In this paper, we derive new upper and lower bounds for the computation of the \orn and \maxn functions, with tight dependence on $p$ and $\delta$.

\subsection{Main Contribution}
The main result of this paper can be stated as follows: it is both sufficient and necessary to use
\[
    (1 \pm o(1)) \frac{n\log \frac{1}{\delta}}{D_{\mathsf{KL}}(p \| 1-p)}
\]
queries in expectation to compute the \orn and \maxn functions with a vanishing error probability $\delta = o(1)$, where $D_{\mathsf{KL}}(p \| 1-p)$ denotes the Kullback-Leibler divergence between $\mathsf{Bern}(p)$ and $\mathsf{Bern}(1-p)$ distributions. This result simultaneously improves the dependence on $p$ in both the upper and lower bounds derived in~\citep{feige1994computing}. 

From a technical standpoint, our lower bound for both functions is based on Le Cam's two point method, which transforms the noisy computing problem in hand into a proper binary testing problem between two instances of the input variables. By carefully designing the two instances to look at, Le Cam's two point lemma~\citep{yu1997assouad} then gives a non-asymptotic lower bound on the error probability of any estimator that wishes to distinguish the evaluation of the function over those two instances. 

The algorithms that attain the upper bounds are executed in two steps. In the first step, a subset of the input variables of sublinear size is chosen. In the second step, an existing algorithm (e.g., one of the algorithms in~\citep{feige1994computing}) is executed over the chosen subset. For the \orn function, the chosen subset corresponds to the input variables that, when queried, return a value of 1. For the \maxn function, the subset corresponds to the elements of the sequence that appear to be larger than the maximum of a random subsample of the input variables (of sublinear size). We show that $(1+o(1))\frac{n\log \frac{1}{\delta}}{D_{\mathsf{KL}}(p \| 1-p)}$ queries are used on average in the first step, while the query complexity of the second step ends up being a lower order term.

\subsection{Related Work}
The noisy computation of the \orn function has been first studied in the literature of circuit with noisy gates~\citep{dobrushin1977lower, dobrushin1977upper, von1956probabilistic, pippenger1991lower, gacs1994lower} and noisy decision trees~\citep{feige1994computing, evans1998average, reischuk1991reliable}. Since the querying model for the \orn function corresponds to noisy reading of the bits, it is closely related to the best arm identification problem, where the variables correspond to real-valued arms and the goal is to identify the arm with the largest value (i.e., reward)~\citep{bubeck2009pure, audibert2010best, garivier2016optimal, jamieson2014best, gabillon2012best, kaufmann2016complexity}. Indeed, any best-arm identification algorithm can be converted to an \orn computation since the evaluation of the \orn function of $n$ bits is equal to the evaluation of their maximum. This observation gives an upper bound $\co\left(\frac{n\log(1/\delta)}{1-H(p)}\right)$ on the number of queries needed for the computation of the \orn function~\citep{audibert2010best}, where $H(\cdot)$ denotes the binary entropy function. However, as we shall see in this paper, this bound doesn't give a tight dependence on $p$.

On the other hand, the noisy computation of the \maxn function has closely followed the literature on the active ranking problem (also known as ``noisy sorting'') and the top-$k$ selection problem~\citep{Mohajer2017,Shah2018,Agarwal2017,Falahatgar2017,Heckel2019,wang2022noisy, Wang2023,Gu-Xu2023}, where the ranking of a subset of the $n$ items is required. Many of these works considered generalized noise models, where the noise probability can depend on the pair of elements to be compared. For our simplified noise model, however, the best known upper and lower bounds for the computation of the \maxn function are due to~\citep{feige1994computing} which gives an upper bound based on the popular knockout tournament algorithm and a lower bound of $\Omega\left(\frac{n\log(1/\delta)}{\log((1-p)/p)}\right)$ queries. As we shall see, the dependence on $p$ in both bounds can be improved.

\subsection{Paper Organization and Notation}
This paper is organized as follows. In Section~\ref{sec:problem}, the noisy \orn and noisy \maxn problems are formally defined, and the main results are stated. In Section~\ref{sec:or}, we derive the upper and lower bounds for the noisy \orn problem, whereas in Section~\ref{sec:max}, we derive the upper and lower bounds for the noisy \maxn problem. In Section~\ref{sec:extension}, we discuss the extension of the model to the case of a varying crossover probability $p$ that depends on $n$.

Notation: The following notation will be used in this paper. Sets are denoted in calligraphic letters, random variables in upper-case letters, and realizations of random variables in lower-case letters, i.e., $\cx$ and $\cy$ are two sets, $X$ and $Y$ are random variables, and $x$ and $y$ are their respective realizations. Bold face letters are used for vectors, whereas normal face letters are used for scalars. The probability of an event $A$ is denoted as $\P(A)$, and the expectation of real-valued random variable $X$ is denoted as $\E[X]$. We will often use the shorthand notation $\E_x(Y)$ to denote the conditional expectation $\E[Y \cond X = x]$, where the random variable $X$ would be clear from the context. For a finite set $\cs$, we write $|\cs|$ to denote the cardinality of $\cs$. For $p \in (0,1)$, $\DKL \triangleq (1-2p)\log\left(\frac{1-p}{p}\right)$ denotes the Kullback-Leibler divergence between $\mathsf{Bern}(p)$ and $\mathsf{Bern}(1-p)$ distributions, and $H(p)\triangleq -p\log p-(1-p)\log (1-p)$ denotes the binary entropy function.

\section{Problem Formulation and Main Results} \label{sec:problem}
\subsection{Noisy \orn Problem}
Let us first consider the problem of computing the $\orn$ function of $n$ bits using noisy observations of the bits. To this end, let ${\bf X} = (X_1,\ldots,X_n) \in \{0,1\}^n$ be a sequence of $n$ bits. The \orn function is defined as
\begin{align}\label{eqn:def_or}
    \mathsf{OR}({\bf X}) = \begin{cases}1, & \text{ if } \exists \,i\in[n]: X_i = 1 \\ 
    0, & \text{ otherwise.}
    \end{cases}
\end{align}
At the $k$th time step, an agent can submit a query $U_k$ corresponding to one of the bits, i.e., $U_k \triangleq X_i$ for some $i \in [n]$. The response $Y_k$ can be expressed as
\[
Y_k = U_k \oplus Z_k,
\]
where $Z_k \sim \mathrm{Bern}(p)$ is a sequence of i.i.d. random variables. Note that the queries are designed adaptively, i.e.,
\[
U_k = f_k({\bf U}^{k-1},{\bf Y}^{k-1}),
\]
for some (possibly stochastic) function $f_k$. Moreover, the querying strategy may have a variable stopping time that depends on the query responses, i.e., the number of queries $M$ may be random. The goal of the agent is to estimate the \orn function using the noisy queries and responses. 

\subsection{Noisy \maxn Problem}
With a slight abuse of notation, for the noisy \maxn problem, we use ${\bf X} = (X_1,\ldots,X_n) \in \mathbb{R}^n$ as a sequence of $n$ real numbers. The \maxn function is defined as
\[
\maxn({\bf X})=\argmax_{i\in[n]}\, X_i.
\]
At the $k$th time step, the agent can submit a query $(U_k,V_k) \triangleq (X_i,X_j)$ for some $i,j \in [n]$. The response $Y_k$ can be expressed as
\[
Y_k = \mathbbm{1}_{\{U_k <V_k\}} \oplus Z_k,
\]
where $\mathbbm{1}_{\{.\}}$ denotes the indicator function, and $Z_k \sim \mathrm{Bern}(p)$ is a sequence of i.i.d. random variables. As for the \orn problem, the queries can be designed adaptively, i.e.,
\[
(U_k,V_k) = g_k({\bf U}^{k-1},{\bf V}^{k-1},{\bf Y}^{k-1}),
\]
for some (possibly stochastic) function $g_k$. Moreover, the querying strategy may have a variable stopping time. The goal of the agent is to estimate the \maxn function using the responses to the noisy pairwise comparisons.

\begin{remark} \label{remark:or_max_relation}
    Notice the difference in the querying models for the noisy \orn and \maxn problems. While the queries in the noisy \orn problem correspond to readings of the bits, the queries in the noisy \maxn problem correspond to pairwise comparisons between the elements in the given sequence. Due to this distinction, one cannot immediately translate any upper or lower bound on the number of queries from one problem to the other (despite the fact that, for example, the \orn of $n$ bits is equal to their maximum).
\end{remark}

\subsection{Main Results}
\label{sec:main-results}
Our main results can be summarized in the following four theorems, which give tight upper and lower bounds on the expected number of queries needed to compute the noisy \orn and noisy \maxn functions.

\begin{theorem}[Lower bound for $\mathsf{OR}$] \label{thm:or-lower} 
    For the noisy \orn problem, it holds that
    \begin{align*}
        \inf_{\hat\mu} \sup_{{\bf x}\in\{0, 1\}^n} \P(\hat\mu \neq \mathsf{OR}({\bf x}))\geq \frac{1}{4}\cdot \exp\left(-\frac{m\cdot D_{\mathsf{KL}}(p\|1-p)}{n }\right),
    \end{align*}
    where $m$ is the expected number of queries under the worst-case input binary sequence.
    Thus, the expected number of queries needed to compute the \orn function with worst-case error probability at most $\delta = o(1)$ is lower bounded by $(1 - o(1)) \frac{n\log \frac{1}{\delta}}{D_{\mathsf{KL}}(p \| 1-p)}$.
\end{theorem}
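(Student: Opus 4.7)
The plan is to apply Le Cam's two-point method combined with a pigeonhole argument over the bit positions. The intuition is that if only $m$ queries can be issued on average, then spread across $n$ coordinates at least one bit receives at most $m/n$ queries on average; flipping that bit from $0$ to $1$ produces an instance whose $\mathsf{OR}$-value differs but which is statistically hard to distinguish from the all-zeros input.

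Concretely, I would take the reference instance $\mathbf{x}_0 = \mathbf{0}$ (for which $\mathsf{OR}(\mathbf{x}_0) = 0$) and, for each $i \in [n]$, the alternative $\mathbf{x}_i = \mathbf{e}_i$ (the $i$th standard basis vector, for which $\mathsf{OR}(\mathbf{x}_i) = 1$). Let $P_0$ and $P_i$ denote the joint distributions of the full transcript $(U_1, Y_1, \ldots, U_M, Y_M)$ induced by any fixed adaptive querying strategy (with stopping time $M$) under $\mathbf{x}_0$ and $\mathbf{x}_i$. Because the two instances agree on every coordinate except $i$, only the queries to bit $i$ contribute to the log-likelihood ratio, and the chain rule for KL divergence yields
\[
\Kull{P_0}{P_i} \;=\; \E_0[N_i] \cdot \DKL,
\]
where $N_i$ denotes the random number of queries to bit $i$ under the protocol.

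Summing this identity over $i \in [n]$ and using $\sum_{i=1}^n N_i = M$ together with $\E_0[M] \leq m$, I would invoke the pigeonhole principle to find some $i^\star \in [n]$ satisfying $\Kull{P_0}{P_{i^\star}} \leq m\cdot\DKL/n$. Applying Le Cam's two-point lemma to the pair $(\mathbf{x}_0, \mathbf{x}_{i^\star})$ (whose $\mathsf{OR}$-values differ) together with the Bretagnolle--Huber inequality $1 - \TVdist(P_0, P_{i^\star}) \geq \tfrac{1}{2}\exp(-\Kull{P_0}{P_{i^\star}})$ then produces the claimed bound $\tfrac{1}{4}\exp(-m\cdot\DKL/n)$. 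For the ``thus'' consequence, I would set this expression equal to $\delta = o(1)$, take logarithms, and rearrange to obtain $m \geq n(\log(1/\delta) - \log 4)/\DKL = (1-o(1))n\log(1/\delta)/\DKL$, where the additive $\log 4$ is absorbed into the $o(1)$ factor because $\log(1/\delta) \to \infty$.

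The main technical obstacle is justifying the chain-rule identity $\Kull{P_0}{P_i} = \E_0[N_i] \cdot \DKL$ under an adaptive strategy with random stopping time $M$. I would handle this by writing the transcript likelihood as a product of per-step conditional likelihoods: at each step the querying rule $f_k$ is the same under $P_0$ and $P_i$, so the only contribution to the log-likelihood ratio is from the response distributions, and this contribution is exactly $\log(\mathrm{Bern}(p)/\mathrm{Bern}(1-p))$ (or its negation) whenever the query targets bit $i$ and zero otherwise. Taking the expectation of the resulting telescoping sum under $P_0$ then reduces to counting how many queries land on bit $i$, which yields $\E_0[N_i]\cdot\DKL$ by Wald's identity (or equivalently, by linearity of expectation after conditioning on the history). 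Once this identity is in place, the remainder of the argument is a clean pigeonhole reduction followed by a standard invocation of Le Cam plus Bretagnolle--Huber.
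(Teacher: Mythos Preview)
Your proposal is correct and follows essentially the same approach as the paper: the same hard instances $\mathbf{x}_0=\mathbf{0}$ versus $\mathbf{x}_i=\mathbf{e}_i$, Le Cam's two-point lemma combined with the Bretagnolle--Huber inequality, the divergence decomposition $\Kull{P_0}{P_i}=\E_0[N_i]\cdot\DKL$ (which the paper cites as the divergence decomposition lemma of Auer et al.), and the same pigeonhole argument over $\sum_i \E_0[N_i]\le m$. The only cosmetic difference is that you apply pigeonhole before invoking Le Cam/Bretagnolle--Huber, whereas the paper takes the supremum over $j$ after; the resulting bound is identical.
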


\begin{theorem}[Upper bound for $\mathsf{OR}$] \label{thm:or-upper}
    For any vanishing $\delta=o(1)$, there exists an algorithm (Algorithm~\ref{alg:noisyor} in Section~\ref{sec:or-upper}) that computes the \orn function with an expected number of queries at most
    \begin{equation}
        (1+o(1))\frac{n\log \frac{1}{\delta}}{\DKL}
    \end{equation}
    and worst-case error probability at most $\delta$.
\end{theorem}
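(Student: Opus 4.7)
The plan is to prove the upper bound by designing a two-phase algorithm of the form sketched in the introduction. In phase one, for each index $i \in [n]$ I would run a sequential probability ratio test (SPRT) on bit $X_i$: accumulate the log-likelihood ratio $\Lambda_k^{(i)} = \sum_{j=1}^k \log\frac{\P(Y_j^{(i)} \mid X_i = 1)}{\P(Y_j^{(i)} \mid X_i = 0)}$ and stop as soon as $|\Lambda_k^{(i)}| \geq \tau$, for a threshold $\tau = (1+o(1))\log(1/\delta)$. If the walk exits at $+\tau$, add $i$ to a candidate set $S$; otherwise, discard it. In phase two, I would compute $\orn(S)$ on the surviving bits using any existing noisy-$\orn$ algorithm with $O(|S|\log(1/\delta))$ queries and error probability at most $\delta$, for example the algorithm of~\citep{feige1994computing}.

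The expected-query analysis follows from Wald's identity applied to the log-likelihood random walk. The per-query increment has mean $-\DKL$ when $X_i = 0$ and $+\DKL$ when $X_i = 1$, so the expected stopping time of the SPRT is $(1+o(1))\tau/\DKL$ regardless of the true value of $X_i$; summing over the $n$ bits matches the target $(1+o(1))\,n\log(1/\delta)/\DKL$. For correctness, a standard likelihood-ratio martingale argument (Ville's inequality applied to $\exp(\Lambda_k^{(i)})$) bounds the per-bit false-positive and false-negative probabilities by $e^{-\tau}\leq \delta$. When $\orn(\mathbf{X}) = 0$, every index in $S$ is therefore a false positive and the true $\orn(S)$ equals $0$, so phase two outputs $0$ with probability at least $1-\delta$. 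When $\orn(\mathbf{X}) = 1$ with $k\geq 1$ one-bits, with probability at least $1-\delta^k \geq 1-\delta$ at least one true one-bit survives to $S$, so $\orn(S) = 1$ and phase two outputs $1$ with probability at least $1-2\delta$. Tuning $\tau$ and the phase-two error budget by $(1+o(1))$ factors absorbs these constants into the overall $\delta$.

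The remaining piece is the phase-two query cost. When $\orn(\mathbf{X}) = 0$, $\E|S| \leq n e^{-\tau} = n\delta$ and a Chernoff bound yields $|S|=o(n)$ with high probability, so phase two contributes $o(n\log(1/\delta)/\DKL)$ queries. The main obstacle is the case $\orn(\mathbf{X})=1$ with $k=\Theta(n)$ one-bits, in which $|S|$ may be linear in $n$ and a naive phase two could cost a constant factor more than phase one. I would resolve this by replacing phase two with an early-termination procedure that tests candidates in $S$ one at a time via a more stringent SPRT (threshold $\tau' = \omega(\log(1/\delta))$), outputting $1$ as soon as any candidate is confirmed as a genuine one-bit and outputting $0$ only if every candidate is rejected; when $S$ contains many true one-bits, the expected number of candidates verified before termination is $O(1)$, making phase two a lower-order additive term. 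A secondary technical subtlety is controlling the overshoot of the random walk past $\pm\tau$ so that the Wald estimate is accurate to a $(1+o(1))$ factor, which can be handled by inflating $\tau$ by an additive $o(\log(1/\delta))$ term such as $\tau = \log(1/\delta) + \log\log(1/\delta)$.
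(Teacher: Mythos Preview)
Your phase one matches the paper's exactly: the paper's \scb subroutine is precisely the SPRT you describe, and its expected cost is $(1+o(1))\log(1/\delta)/\DKL$ per bit by the same Wald-type argument. Your identification of the obstacle---that when the input has $\Theta(n)$ one-bits the candidate set $S$ can have size $\Theta(n)$, so a naive phase two costs another $\Theta(n\log(1/\delta))$ queries---is also exactly right.

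The gap is in your proposed fix. Replacing phase two by one-at-a-time verification with threshold $\tau'=\omega(\log(1/\delta))$ does not control the error on the all-zero input: every element of $S$ is then a zero-bit, and a union bound gives $\P(\text{output }1)\le\E|S|\cdot e^{-\tau'}\le n\delta\,e^{-\tau'}$, so you actually need $\tau'\ge\log n$, which $\tau'=\omega(\log(1/\delta))$ does not imply when $\delta$ decays slowly. And once $\tau'\ge\log n$, the phase-two cost on the all-zero (or $k=1$) input is roughly $\E|S|\cdot\tau'/\DKL\approx n\delta\log n/\DKL$; for this to be $o\bigl(n\log(1/\delta)/\DKL\bigr)$ you would need $\delta\log n=o(\log(1/\delta))$, which fails for, say, $\delta=1/\log\log n$. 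No single choice of $\tau'$ meets both constraints over the full range $\delta=o(1)$. The paper sidesteps this tension with a different device: after phase one it outputs $1$ \emph{directly} whenever $|S|\ge\max\bigl(\log n,\,n\delta\log\tfrac{1}{\delta}\bigr)$, and otherwise runs the existing $O(|S|\log(1/\delta))$ noisy-\orn subroutine on $S$. A careful Chernoff argument shows that when $\orn(\mathbf{X})=0$ this threshold is exceeded with probability only $o(\delta)$, while whenever the subroutine is invoked one has $|S|<\max\bigl(\log n,\,n\delta\log\tfrac{1}{\delta}\bigr)=o(n)$ deterministically, so phase two costs $o(n\log(1/\delta))$ with no union-bound penalty.
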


\begin{theorem}[Lower bound for $\mathsf{MAX}$]\label{thm:max-lower} 
    For the noisy \maxn problem, it holds that
    \begin{align*}
        \inf_{\hat\mu} \sup_{{\bf x}\in \mathbb{R}^n} \P(\hat\mu \neq \maxn({\bf x}))\geq \frac{1}{4}\cdot \exp\left(-\frac{m \cdot D_{\mathsf{KL}}(p\|1-p)}{n }\right),
    \end{align*}
    where $m$ is the expected number of queries under the worst-case input sequence. Thus, the expected number of queries needed to compute the \maxn function with worst-case error probability at most $\delta = o(1)$ is lower bounded by $(1 - o(1)) \frac{n\log \frac{1}{\delta}}{D_{\mathsf{KL}}(p \| 1-p)}$.
\end{theorem}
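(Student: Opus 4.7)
The plan is to adapt the Le Cam two-point method used for Theorem~\ref{thm:or-lower} to the pairwise-comparison query model. I will exhibit two input sequences $\mathbf{x}^{(0)}, \mathbf{x}^{(1)} \in \mathbb{R}^n$ with distinct maximizers whose induced observation distributions $P_{\mathbf{x}^{(0)}}, P_{\mathbf{x}^{(1)}}$ satisfy
\[
\Kull{P_{\mathbf{x}^{(0)}}}{P_{\mathbf{x}^{(1)}}} \leq (1+o(1)) \cdot \frac{m \cdot D_{\mathsf{KL}}(p\|1-p)}{n},
\]
and then close with the Bretagnolle--Huber inequality to obtain $\inf_{\hat\mu}\sup_{\mathbf{x}} \P(\hat\mu \neq \maxn(\mathbf{x})) \geq \frac{1}{4}\exp(-\Kull{P_{\mathbf{x}^{(0)}}}{P_{\mathbf{x}^{(1)}}})$, which is exactly the claimed bound; the asymptotic form then follows by inverting for $\delta = o(1)$.

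The construction I have in mind is the family of $n$ indicator instances $\mathbf{x}^{(i)} \in \{0,1\}^n$ defined by $\mathbf{x}^{(i)}_k = \mathbbm{1}_{\{k = i\}}$, so that $\maxn(\mathbf{x}^{(i)}) = i$. The key structural observation is that for $i \neq j$, a noisy pairwise comparison on any pair $\{s,t\} \neq \{i,j\}$ yields identical Bernoulli distributions under $\mathbf{x}^{(i)}$ and $\mathbf{x}^{(j)}$: in both instances the pair $(X_s, X_t)$ is either $(0,0)$ or $(1,0)$ up to relabeling, and in either case $\mathbbm{1}_{\{X_s < X_t\}}$ takes the same value under both hypotheses. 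Only a query on the pair $\{i,j\}$ itself flips this indicator, producing $\mathrm{Bern}(p)$ under $\mathbf{x}^{(i)}$ versus $\mathrm{Bern}(1-p)$ under $\mathbf{x}^{(j)}$.

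Next I apply the KL chain rule along the adaptive query sequence. Letting $N^{ij}$ denote the random number of queries on the pair $\{i,j\}$ up to the stopping time $M$, every non-$\{i,j\}$ query contributes zero to the per-step KL, so
\[
\Kull{P_{\mathbf{x}^{(i)}}}{P_{\mathbf{x}^{(j)}}} = D_{\mathsf{KL}}(p \| 1-p) \cdot \E_{\mathbf{x}^{(i)}}[N^{ij}].
\]
A pigeonhole argument then closes the loop: for any fixed $i$, $\sum_{j\neq i} N^{ij} \leq M$, so $\sum_{j\neq i} \E_{\mathbf{x}^{(i)}}[N^{ij}] \leq m$, and some $j^* \neq i$ must satisfy $\E_{\mathbf{x}^{(i)}}[N^{ij^*}] \leq m/(n-1) = (1+o(1))\, m/n$. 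Setting $\mathbf{x}^{(0)} := \mathbf{x}^{(i)}$ and $\mathbf{x}^{(1)} := \mathbf{x}^{(j^*)}$ and invoking Bretagnolle--Huber yields the non-asymptotic statement.

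The main technical obstacle is justifying the KL chain rule with the \emph{random} stopping time $M$. I expect to handle this by first analyzing a deterministic horizon $T$, applying the standard chain rule there, and then letting $T \to \infty$; the uniform bound $\E[M] \leq m < \infty$ allows passage to the limit via monotone/dominated convergence (alternatively, one can use an optional-stopping argument on the log-likelihood martingale). Everything else is a direct adaptation of the \orn lower-bound proof; the new ingredient relative to \citep{feige1994computing} is the use of Bretagnolle--Huber to isolate the sharper $D_{\mathsf{KL}}(p\|1-p)$ dependence in place of $\log((1-p)/p)$.
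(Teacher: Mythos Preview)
Your ``key structural observation'' is false, and this breaks the argument. Take any $t\notin\{i,j\}$ and consider the ordered query $(U_k,V_k)=(X_t,X_i)$. Under $\mathbf{x}^{(i)}$ the values are $(0,1)$, so $\mathbbm{1}_{\{X_t<X_i\}}=1$; under $\mathbf{x}^{(j)}$ the values are $(0,0)$ and the indicator is $0$. Hence this query has response law $\mathrm{Bern}(1-p)$ versus $\mathrm{Bern}(p)$ and \emph{does} distinguish the two hypotheses, even though $\{t,i\}\neq\{i,j\}$. Symmetrically, any query with $j$ in the second slot also distinguishes them. The divergence decomposition therefore gives
\[
\Kull{P_{\mathbf{x}^{(i)}}}{P_{\mathbf{x}^{(j)}}}=\DKL\cdot\E_{\mathbf{x}^{(i)}}\Bigl[\ \sum_{s\neq i}N^{(s,i)}+\sum_{s\neq j}N^{(s,j)}\ \Bigr],
\]
not $\DKL\cdot\E_{\mathbf{x}^{(i)}}[N^{ij}]$. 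The first sum is independent of $j$ and can be as large as $m$, so pigeonholing over $j$ no longer produces an $m/(n-1)$ bound; the inequality you obtain is vacuous.

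The paper avoids this by a different two-point family. It fixes a strictly ordered reference $\mathbf{y}$ with $y_1<\cdots<y_n$ and, for each $i\in[n-1]$, forms $\mathbf{z}_i=(y_1,\ldots,y_{i-1},y_n,y_i,\ldots,y_{n-1})$ by cyclically shifting positions $i,\ldots,n$. One checks that \emph{every} pairwise order relation is preserved between $\mathbf{y}$ and $\mathbf{z}_i$ except comparisons of position $i$ with positions $j>i$; thus $\Kull{\mathbb{P}_{\mathbf{y}}}{\mathbb{P}_{\mathbf{z}_i}}=\DKL\cdot\sum_{j>i}\E_{\mathbf{y}}[M_{i,j}]$. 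Since $\sum_{i=1}^{n-1}\sum_{j>i}\E_{\mathbf{y}}[M_{i,j}]=\E_{\mathbf{y}}[M]\le m$, pigeonholing over $i\in[n-1]$ yields the required $m/(n-1)$ bound. The ties in your indicator instances are precisely what prevent such a ``single row of flipped comparisons'' construction; the cyclic shift on a strictly ordered sequence is the missing idea.
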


\begin{theorem}[Upper bound for $\mathsf{MAX}$] \label{thm:max-upper}
    For any vanishing $\delta=o(1)$, there exists an algorithm (Algorithm~\ref{alg:noisymax} in Section~\ref{sec:max-upper}) that computes the $\maxn$ function with an expected number of queries at most
    \[
    (1+o(1))\frac{n\log \frac{1}{\delta}}{D_\mathsf{KL}(p||1-p)}
    \]
    and worst-case error probability at most $\delta$.
\end{theorem}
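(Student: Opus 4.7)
The plan is to construct an algorithm, following the sketch outlined in the introduction, and then carefully track both its expected query complexity and its worst-case error probability. The key idea is to produce a candidate set $T\subseteq[n]$ of size $o(n)$ that, with high probability, contains the true $\argmax$, and then invoke an existing $\maxn$ routine with linear-in-size query complexity (e.g., Feige's knockout tournament from~\citep{feige1994computing}) on $T$. Because such a black-box algorithm needs only $O(|T|\log(1/\delta)/\DKL)$ queries, reducing $|T|$ to $o(n)$ makes the final call a lower-order contribution; the burden of the argument is thus to construct $T$ using at most $(1+o(1))n\log(1/\delta)/\DKL$ queries.

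Concretely, I would proceed as follows. (i) Draw a uniformly random subset $S\subseteq[n]$ with $|S|=s$, where $s=\omega(1)$ and $s=o(n)$ (for instance $s=\min(\log(1/\delta),\sqrt{n})$), and use Feige's $\maxn$ algorithm with error budget $\delta/3$ to identify $m^\star\triangleq\argmax_{i\in S}X_i$; this costs $O(s\log(1/\delta)/\DKL)=o(n\log(1/\delta)/\DKL)$ queries. (ii) For every $i\in[n]\setminus\{m^\star\}$, perform exactly $k=(1+\eta)\log(1/\delta)/\DKL$ noisy comparisons of $X_i$ against $X_{m^\star}$ and place $i$ into $T$ iff the fraction of responses indicating $X_i>X_{m^\star}$ exceeds a threshold $\tau=p+\epsilon$; the two small parameters $\eta,\epsilon=o(1)$ are chosen so that $\eta$ dominates $\epsilon$ (keeping Type II below $\delta/3$) and $\epsilon^2\log(1/\delta)\to\infty$ (forcing Type I to be $o(1)$). (iii) Run Feige's $\maxn$ algorithm on $T\cup\{m^\star\}$ with error budget $\delta/3$ and output its answer.

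The analysis splits cleanly. For query complexity, step (i) contributes $o(n\log(1/\delta)/\DKL)$; step (ii) contributes exactly $(n-1)k=(1+o(1))n\log(1/\delta)/\DKL$; and for step (iii) I would bound $\E[|T|]$ as follows. By Chernoff, the Type I probability that a single element with $X_i<X_{m^\star}$ joins $T$ is at most $\exp(-k\, D_{\mathsf{KL}}(p+\epsilon\|p))=\exp(-\Theta(\epsilon^2\log(1/\delta)/\DKL))=o(1)$, while the classical order-statistic identity gives $\E[\#\{i:X_i>X_{m^\star}\}]=(n-s)/(s+1)=O(n/s)=o(n)$, so $\E[|T|]=o(n)$ and step (iii) costs $o(n\log(1/\delta)/\DKL)$ queries in expectation. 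For error probability, I would union-bound the three failure modes: (a) Feige's subroutine on $S$ misidentifies $m^\star$ (probability $\le\delta/3$); (b) conditional on $i^\star\notin S$, the true maximizer $i^\star$ is rejected in step (ii), a Type II error bounded by $\exp(-k\, D_{\mathsf{KL}}(\tau\|1-p))\le\delta/3$ thanks to the expansion $D_{\mathsf{KL}}(p+\epsilon\|1-p)=\DKL-O(\epsilon)$; (c) Feige's subroutine on $T\cup\{m^\star\}$ fails (probability $\le\delta/3$). The case $i^\star\in S$ is automatic because $X_{m^\star}=X_{i^\star}$ strictly dominates every other element of $T$, so step (iii) then returns $i^\star$ whenever it succeeds.

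The delicate point is threading the needle in step (ii): a single budget of $k=(1+o(1))\log(1/\delta)/\DKL$ queries per element must simultaneously certify the true maximizer with Type II error $\le\delta/3$ and make the per-element Type I error $o(1)$. This forces the asymmetric threshold $\tau=p+\epsilon$ and a balance like $\epsilon=(\log(1/\delta))^{-1/4}$, $\eta=2\epsilon$. Verifying that this choice is simultaneously consistent with both Chernoff bounds and that the resulting $o(n)$ bound on $\E[|T|]$ holds uniformly over all vanishing $\delta$ (including regimes where $\log(1/\delta)\ll\log n$) is the main obstacle; the remainder of the argument reduces to standard Chernoff/order-statistic computations and a union bound.
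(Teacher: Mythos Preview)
Your three-phase plan---random subsample, find a pivot, filter every element against the pivot, then run a linear-cost $\maxn$ black box on the survivors---is exactly the skeleton of the paper's Algorithm~\ref{alg:noisymax}; the error bound is the same three-event union bound and the query bound is the same ``$\E[|T|]=o(n)$ so step~(iii) is lower order'' argument (the paper samples each element into the pivot pool with probability $1/\log n$ rather than fixing $|S|=\min(\log(1/\delta),\sqrt n)$, but either choice works).

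The one substantive difference is how the per-element filtering comparison is implemented. You use a \emph{fixed} budget of $k=(1+\eta)\log(1/\delta)/\DKL$ queries with a shifted threshold $\tau=p+\epsilon$ and then have to balance $\eta$ against $\epsilon$. The paper instead invokes the variable-length \snc{} routine (a sequential likelihood-ratio test) with target error $\delta$ on each pair; by Lemma~\ref{lemma:noisy-compare} this already spends $(1+o(1))\log(1/\delta)/\DKL$ queries in expectation while keeping both error types at most $\delta$, so the balancing problem disappears and the survivor set satisfies $|\ct|\le |\{i:X_i>\hat y\}|+\mathrm{Binom}(n,\delta)$ directly. Your fixed-length route is more elementary but, as you suspect, the parameter choice needs care: a first-order expansion gives $D_{\mathsf{KL}}(p+\epsilon\,\|\,1-p)=\DKL-2\epsilon\log\tfrac{1-p}{p}+O(\epsilon^2)$, so to push Type~II below $\delta/3$ you need $\eta\,\DKL>2\epsilon\log\tfrac{1-p}{p}$, i.e.\ $\eta>\tfrac{2}{1-2p}\,\epsilon$; your tentative $\eta=2\epsilon$ therefore fails for every $p\in(0,1/2)$. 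Taking genuinely $\eta\gg\epsilon$ (e.g.\ $\epsilon=(\log(1/\delta))^{-1/3}$, $\eta=(\log(1/\delta))^{-1/6}$) fixes this, and then the remainder of your outline goes through.
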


\section{Noisy \orn Problem} \label{sec:or}

\subsection{Lower Bound} \label{sec:or-lower}
First, we consider the lower bound for the noisy \orn problem (Theorem~\ref{thm:or-lower}). We give a proof of the theorem based on Le Cam's two point method \citep{lecam1973convergence, yu1997assouad}, which is stated in  Lemma~\ref{lem:lecam} in Appendix~\ref{app:lemma} for the reader's convenience.

\begin{proof}[Proof of Theorem~\ref{thm:or-lower}]
Let ${\bf x}_0$ be the length-$n$ all-zero sequence, and let ${\bf x}_j \in \{0,1\}^n$ be such that $x_{jj} = 1$ and $x_{ji} = 0$ for $i \neq j$, where $x_{ji}$ refers to the $i$-th element in the $j$th binary sequence ${\bf x}_j$. We can first verify that for any estimate $\hat\mu$, one has
\begin{align*}
    \mathds{1}(\hat\mu \neq \mathsf{OR}({\bf x}_0)) +  \mathds{1}(\hat\mu \neq \mathsf{OR}({\bf x}_j)) \geq 1.
\end{align*}
By applying Le Cam's two point lemma on ${\bf x}_0$ and ${\bf x}_j$, we know that
\begin{align*}
      \inf_{\hat\mu} \sup_{{\bf x}\in\{0, 1\}^n} \P(\hat\mu \neq \mathsf{OR}({\bf x})) & \geq \frac{1}{2}\left(1-\mathsf{TV}(\mathbb{P}_{{\bf x}_0}, \mathbb{P}_{{\bf x}_j})\right). 
\end{align*}
Here, $\mathbb{P}_{{\bf x}_j}$ denotes the distribution of the query responses in $m$ rounds when the underlying binary sequence is ${\bf x}_j$. By taking maximum over $j$ on the right-hand side, we have
\begin{align*}
     \inf_{\hat\mu} \sup_{{\bf x}\in\{0, 1\}^n} \P(\hat\mu \neq \mathsf{OR}({\bf x})) & \geq \sup_{1\leq j\leq n}\frac{1}{2}\left(1-\mathsf{TV}(\mathbb{P}_{{\bf x}_0}, \mathbb{P}_{{\bf x}_j})\right)  \\ 
     & \stackrel{(a)}{\geq} \sup_{1\leq j\leq n} \frac{1}{4}\exp\left(-D_{\mathsf{KL}}(\mathbb{P}_{{\bf x}_0}, \mathbb{P}_{{\bf x}_j})\right),
\end{align*}
where $(a)$ follows due to the Bretagnolle–Huber inequality~\citep{bretagnolle79} (Lemma~\ref{lem:bh} in Appendix~\ref{app:lemma}). Now, we aim at computing $D_{\mathsf{KL}}(\mathbb{P}_{{\bf x}_0}, \mathbb{P}_{{\bf x}_j})$. Let $M_j$ be the random variable that denotes the number of times the $j$th element is queried among all $m$ rounds.  From the divergence decomposition lemma~\citep{auer1995gambling} (Lemma~\ref{lem:div} in Appendix~\ref{app:lemma}) and the fact that ${\bf x}_0$ and ${\bf x}_j$ only differ in the $j$th position, we have  
\begin{align*}
   D_{\mathsf{KL}}(\mathbb{P}_{{\bf x}_0}, \mathbb{P}_{{\bf x}_j}) = \E_{{\bf x}_0}[M_j] \cdot D_{\mathsf{KL}}(p\|1-p).
\end{align*}
Here, $\mathbb{E}_{{\bf x}_0}[M_j]$ denotes the expected number of times the $j$th element is queried when the underlying sequence is ${\bf x}_0$. Thus we have 
\begin{align*}
     \inf_{\hat\mu} \sup_{{\bf x}\in\{0, 1\}^n} \P(\hat\mu \neq \mathsf{OR}({\bf x})) \geq \sup_{1\leq j\leq n} \frac{1}{4}\exp\left(-\E_{{\bf x}_0}[M_j] \cdot D_{\mathsf{KL}}(p\|1-p)\right).
\end{align*}
Now, since $\sum_j \E_{{\bf x}_0}[M_j] = \E_{{\bf x}_0}[M] \leq \sup_{\bf x}\E_{\bf x}[M] \triangleq m$, there must exist some $j$ such that $\E_{{\bf x}_0}[M_j]  \leq m/n$. This gives
\begin{align*}
      \inf_{\hat\mu} \sup_{{\bf x}\in\{0, 1\}^n} \P(\hat\mu \neq \mathsf{OR}({\bf x})) &  \geq  \frac{1}{4} \exp\left(-\frac{m\cdot D_{\mathsf{KL}}(p\|1-p)}{n}\right).
\end{align*}
Therefore, to achieve an error probability $\delta$ for any input sequence ${\bf x}$, the number of queries must satisfy that
\[
m\geq \frac{n\log \frac{1}{4\delta}}{D_\mathsf{KL}(p||1-p)}=(1-o(1))\frac{n\log \frac{1}{\delta}}{D_\mathsf{KL}(p||1-p)}.
\]
\end{proof}

\subsection{Upper Bound} \label{sec:or-upper}
Now, we prove the upper bound for the noisy \orn problem (Theorem~\ref{thm:or-upper}). Towards this end, we first describe two existing algorithms that will be used as subroutines in our proposed algorithm for the noisy \orn problem. The first algorithm is the \scb algorithm, which takes as input a single bit and returns an estimate of the bit through repeated queries. The following lemma holds for the \scb algorithm, which is given for completion in Algorithm~\ref{alg:checkbit}. Note that the parameter $\alpha$ in the $i$th iteration of Algorithm~\ref{alg:checkbit} corresponds to the probability that the input bit is one given the first $i$ noisy observations.

\begin{algorithm}[t]
    \caption{\scb{} algorithm for finding the value of an input bit}
    \label{alg:checkbit}
    \textbf{Input}: Input bit $x$, tolerated error probability $\delta$, crossover probability $p$.\\
    \textbf{Output}: Estimate of $x$.
    \begin{algorithmic}[1]
        \State Set $\alpha\gets 1/2$.
        \While{$\alpha\in(\delta, 1-\delta)$}
            \State Query the input bit $x$.
            \If{$1$ is observed}
                \State Set $\alpha \gets \frac{(1-p)\alpha}{(1-p)\alpha + p(1-\alpha)}$.
            \Else
                \State Set  $\alpha \gets \frac{p\alpha}{p\alpha + (1-p)(1-\alpha)}$. 
            \EndIf
        \EndWhile
        \If{$\alpha\ge 1-\delta$}
            \State \Return $1$.
        \Else
            \State \Return $0$.
        \EndIf
    \end{algorithmic}
\end{algorithm}

\begin{lemma}[Lemma 13 in~\citep{Gu-Xu2023}] \label{lemma:check-bit}
    There exists a randomized algorithm, namely, the \scb{} algorithm, that finds the value of any input bit $x$ with error probability at most $\delta$. The algorithm makes at most 
       $ \frac{1}{1-2p}\left\lceil\frac{\log\frac{1-\delta}{\delta}}{\log\frac{1-p}{p}}\right\rceil$
    queries in expectation.
\end{lemma}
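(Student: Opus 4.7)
The plan is to reduce Algorithm~\ref{alg:checkbit} to an SPRT over a symmetric random walk. Define the log-posterior-odds $L_k \triangleq \log\frac{\alpha_k}{1-\alpha_k}$ after $k$ queries, initialized at $L_0 = 0$. A direct computation shows that the Bayesian update in the algorithm translates to $L_{k+1} = L_k + s$ if the $k$-th response is $1$ and $L_{k+1} = L_k - s$ if it is $0$, where $s \triangleq \log\frac{1-p}{p}$. The stopping rule $\alpha \notin (\delta, 1-\delta)$ is equivalent to $|L_k| \geq b$ with $b \triangleq \log\frac{1-\delta}{\delta}$, and the algorithm outputs $1$ iff $L_N \geq b$ at the first-exit time $N$. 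So we are analyzing a constant-step random walk on the lattice $s\mathbb{Z}$, absorbed at the first crossing of $\pm b$.

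For the error probability, I would use a standard likelihood-ratio martingale. Conditional on $x=1$, each step is $+s$ with probability $1-p$ and $-s$ with probability $p$, and $(1-p)e^{-s} + p\,e^{s} = 1$, so $\{e^{-L_k}\}$ is a martingale. The strict positive drift $(1-2p)s > 0$ forces $N < \infty$ with finite expectation, so Doob's optional stopping theorem yields $1 = \E_{x=1}[e^{-L_N}] \geq \Pr_{x=1}(L_N \leq -b)\cdot e^{b}$, i.e.\ $\Pr_{x=1}(L_N \leq -b) \leq e^{-b} = \delta/(1-\delta)$. The case $x=0$ is symmetric via the martingale $\{e^{L_k}\}$. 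The small gap between $\delta/(1-\delta)$ and $\delta$ is absorbed by a vanishing tightening of the threshold, which does not affect the leading-order bound.

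For the expected query count, I would apply Wald's identity to the drift of $L_k$. Under $x=1$, $\E[L_{k+1} - L_k] = (1-2p)s = D_{\mathsf{KL}}(p\|1-p)$. Since the walk lives on $s\mathbb{Z}$, it cannot stop before reaching a lattice point of absolute value at least $b$, and the smallest such value is $\lceil b/s \rceil \cdot s$; hence $|L_N| = \lceil b/s\rceil\cdot s$ almost surely, and in particular $\E[L_N] \leq \lceil b/s\rceil\cdot s$. Wald's identity then gives
\[
(1-2p)\,s \cdot \E_{x=1}[N] \;=\; \E_{x=1}[L_N] \;\leq\; \lceil b/s \rceil \cdot s,
\]
so $\E_{x=1}[N] \leq \frac{1}{1-2p}\bigl\lceil \log\frac{1-\delta}{\delta}\big/\log\frac{1-p}{p}\bigr\rceil$, matching the claim; the case $x=0$ is analogous.

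The principal subtlety is the overshoot analysis: because $b$ is generically not an integer multiple of $s$, the walk can cross $\pm b$ with a residual of up to one step, and this is precisely what produces the ceiling in the bound. A secondary point is verifying $\E[N] < \infty$ before invoking optional stopping and Wald, which follows from a routine exponential-tail estimate on the hitting time of a biased random walk.
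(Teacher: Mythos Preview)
The paper does not prove this lemma; it is quoted verbatim as Lemma~13 of \citep{Gu-Xu2023} and used as a black box. So there is no in-paper argument to compare against, and your SPRT/random-walk reduction is exactly the standard proof one would expect (and is essentially the argument in \citep{Gu-Xu2023}).

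Your analysis is correct in outline, but one step is looser than it needs to be and your proposed fix is the wrong one. From the optional-stopping bound you obtain $\Pr_{x=1}(L_N\le -b)\le e^{-b}=\delta/(1-\delta)$, and you propose to ``absorb'' the slack by tightening the threshold. That would change the ceiling and hence the stated query bound, which is a non-asymptotic claim. Instead, use the lattice structure you already invoked for the query bound: at the stopping time $L_N\in\{\pm c\}$ with $c=\lceil b/s\rceil s\ge b$, so the martingale identity $1=(1-q)e^{-c}+qe^{c}$ gives $q=\dfrac{1}{e^{c}+1}\le \dfrac{1}{e^{b}+1}=\delta$ exactly. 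With this refinement, both the error bound $\delta$ and the expected-query bound $\frac{1}{1-2p}\bigl\lceil \log\frac{1-\delta}{\delta}/\log\frac{1-p}{p}\bigr\rceil$ hold simultaneously with no adjustment needed. The remaining points (finiteness of $\E[N]$ via positive drift, Wald's identity, the overshoot producing the ceiling) are fine.
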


The second algorithm is the \stouror algorithm for computing the \orn function from noisy queries. This algorithm has been previously proposed in our earlier work~\citep{zhu2023optimal}, and is given here for completion in Algorithm~\ref{alg:tournament_or}. The following lemma gives an upper bound on the expected number of queries made by the \stouror algorithm.

\begin{lemma}[Theorem J.1 in~\citep{zhu2023optimal}] \label{lemma:tournament-or}
    For any $\delta<1/2$, there exists an algorithm, namely the \stouror algorithm, that computes the \orn function with error probability at most $\delta$. The expected number of queries made by the algorithm is upper bounded by $C\left(\frac{n}{1-H(p)}+\frac{n\log(1/\delta)}{\DKL}\right)$, where $C$ is an absolute constant.
\end{lemma}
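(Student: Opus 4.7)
The plan is to design a hierarchical tournament algorithm built on top of \scb (Lemma~\ref{lemma:check-bit}), where the two terms in the bound arise from two distinct parts of the algorithm. The first term $n/(1-H(p))$ should correspond to a cheap constant-confidence scanning pass over all $n$ bits, and the second term $n\log(1/\delta)/D_\mathsf{KL}(p\|1-p)$ should correspond to a high-accuracy verification stage confined to a small set of candidate bits. Arranging the bits as leaves of a balanced binary tree of depth $\log n$ offers a natural structure to interleave these two stages, with confirmed $1$'s allowed to propagate up the tree so that the final answer can be declared as soon as a single bit is verified to be $1$.

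The key steps are as follows. First, run a coarse-accuracy sub-routine on each leaf bit, targeting only a constant per-bit error probability; a repetition-and-majority-vote analysis or a loose-threshold \scb invocation uses $O(1/(1-H(p)))$ queries per bit, contributing $O(n/(1-H(p)))$ in total. Second, traverse the tree bottom-up: at each internal node combine the two children's coarse labels into a provisional OR, and whenever a subtree label is $1$, apply a refined \scb call to the underlying bits with error tolerance on the order of $\delta$, which by Lemma~\ref{lemma:check-bit} costs $O(\log(1/\delta)/D_\mathsf{KL}(p\|1-p))$ queries. Third, allocate the total error budget $\delta$ geometrically across the $\log n$ levels so that the union bound sums to $\delta$, and short-circuit the tournament as soon as any refined verification returns $1$.

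The main obstacle is showing that the expected number of refined verifications stays $O(n)$, so that the total verification cost remains $O(n\log(1/\delta)/D_\mathsf{KL}(p\|1-p))$ instead of picking up an extra $\log n$ factor. This requires an amortized analysis that tracks how $1$-labels propagate upward: on an all-zero input, the number of spurious $1$-labels at each level must be controlled in expectation using the constant error rate of the scanning phase; on an input containing a $1$, early termination must be invoked to bound the work above the first confirmed $1$. A secondary difficulty is calibrating the thresholds of the scanning sub-routine precisely enough that its per-bit cost lands at $1/(1-H(p))$ rather than the slightly larger $1/D_\mathsf{KL}(p\|1-p)$ that a naive SPRT analysis would give, since for constant error $\delta$ the first term of the bound is not dominated by the second.
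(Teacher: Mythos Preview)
The paper does not prove this lemma in-text (it cites Theorem~J.1 of \citep{zhu2023optimal}) but it does specify the algorithm (Algorithm~\ref{alg:tournament_or}), from which the intended argument can be read off. It is a \emph{single-phase} knockout tournament: at level $i$ one runs \scb with tolerance $\tilde\delta_i=\delta^{2(2i-1)}$ on one bit of each surviving pair and promotes whichever bit looks like a $1$. There is no coarse scan. Both terms in the bound come from one sum: by Lemma~\ref{lemma:check-bit} each call at level $i$ costs at most $\tfrac{1}{1-2p}+\tfrac{(4i-2)\log(1/\delta)}{\DKL}$, so
\[
\sum_{i\ge 1}\frac{n}{2^i}\left(\frac{1}{1-2p}+\frac{(4i-2)\log(1/\delta)}{\DKL}\right)\le\frac{n}{1-2p}+\frac{6\,n\log(1/\delta)}{\DKL},
\]
using $\sum_{i\ge 1}(4i-2)/2^i=6$. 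The $n/(1-H(p))$ in the statement simply absorbs the $n/(1-2p)$ overhead coming from the ceiling in each of the $\sim n$ \scb invocations; it is not the cost of a separate pass. Correctness is a union bound over levels: a genuine $1$-bit is dropped at level $i$ with probability at most $\tilde\delta_i$, and $\sum_i\delta^{4i-2}<\delta$.

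Your two-phase design is genuinely different, and the obstacles you flag are not resolved by the plan. On the all-zero input, a constant-error coarse scan leaves $\Theta(n)$ false $1$-labels, and OR-ing up the tree does not thin them. If you short-circuit on the first refined $1$, the bottom-level refined checks must each have tolerance $O(\delta/n)$, costing $\Theta(\log(n/\delta)/\DKL)$ apiece on $\Theta(n)$ checks and producing exactly the extra $n\log n/\DKL$ you are trying to avoid. On an input containing a $1$-bit, the coarse scan mislabels it as $0$ with constant probability, after which nothing in your tree traversal recovers it; your plan addresses early termination for the query count in this case but not correctness. The paper's algorithm sidesteps both issues by never running a coarse pass: every \scb call already has tolerance at most $\delta^2$, so there is no pool of false positives to clean up and no risk of losing a true $1$ to a low-accuracy first look, while the halving of the candidate set at each level is what makes the query sum geometric. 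Your secondary difficulty is likewise moot: no sub-routine needs to be tuned to land at $1/(1-H(p))$ per bit, because that term does not come from a scan.
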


\begin{algorithm}[t]
    \caption{\stouror algorithm for computing noisy \orn function}
    \label{alg:tournament_or}
    \textbf{Input}: Bit sequence ${\bf x} = (x_1,\ldots,x_n)$, tolerated error probability $\delta$, crossover probability $p$.\\
    \textbf{Output}: Estimate of $\orn({\bf x})$.
    \begin{algorithmic}[1]
        \State Set ${\bf y} \gets {\bf x}$, $r\gets n$.
        \For{iteration $i= 1:\lceil\log_2(n)\rceil $}
            \State Set $\tilde\delta_i \gets \delta^{2(2i-1)}$.
            \For{iteration $j = 1:\lfloor r/2\rfloor$}
                \State Set $a \gets \scb(y_{2j-1}, \, \tilde\delta_i, \, p)$.
                \If{$a = 1$}
                    \State Set $z_j \gets y_{2j-1}$.
                \Else
                    \State Set $z_j \gets y_{2j}$.
                \EndIf
            \EndFor
            \If{$r$ is even}
                \State Set ${\bf y} \gets (z_1, \ldots, z_{r/2})$.
            \Else
                \State Set ${\bf y} \gets (z_1, \ldots, z_{\lfloor r/2\rfloor}, y_r)$.
            \EndIf
            \State Set $r\gets \lceil r/2 \rceil$.
        \EndFor
        \State \Return \scb($y$, $\delta$, p)
    \end{algorithmic}
\end{algorithm}

Now, we are ready to describe the proposed \sno algorithm for the noisy \orn problem. The algorithm is given in Algorithm~\ref{alg:noisyor}, in which the \scb and \stouror algorithms appear as subroutines. In the following, we analyze the error probability and the expected number of queries made by the \sno algorithm.

\begin{algorithm}[t]
    \caption{Proposed \sno algorithm}
    \label{alg:noisyor}
    \textbf{Input}: Bit sequence ${\bf x} = (x_1,\ldots,x_n)$, tolerated error probability $\delta$, crossover probability $p$.\\
    \textbf{Output}: Estimate of $\orn({\bf x})$.
    \begin{algorithmic}[1]
        \State Set ${\bf y} \gets \emptyset$, $r \gets 0$.
        \For{$i\in [n]$}
            \If{\scb($x_i$, $\delta$, $p)=1$}
                \State Append $x_i$ to ${\bf y}$.
                \State Set $r \gets r + 1$.
            \EndIf
        \EndFor
        \If{$r=0$}
            \State \Return $0$
        \ElsIf{$r\ge \max(\log n, \, n\delta\log\frac{1}{\delta})$}
            \State \Return $1$
        \Else
            \State \Return \stouror(${\bf y}$, $\delta$, $p$)
        \EndIf
    \end{algorithmic}
\end{algorithm}

\vspace{0.5em}
\subsubsection{Error Analysis}
For the \sno algorithm, the following lemma holds.
\begin{lemma} \label{lemma:sno_error}
    For any input parameter $\delta < 1/2$ and bit sequence ${\bf x}$, the error probability of the \sno algorithm is upper bounded by $2\delta$.
\end{lemma}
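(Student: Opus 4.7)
I plan to prove the lemma by case analysis on the true value of $\orn({\bf x})$, tracking the three terminating branches of Algorithm~\ref{alg:noisyor}: output $0$ (when $r=0$), output $1$ (when $r$ exceeds the threshold $\max(\log n, n\delta\log(1/\delta))$), or output $\stouror({\bf y},\delta,p)$.

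\textbf{Case $\orn({\bf x})=0$.} Every bit equals zero, so by Lemma~\ref{lemma:check-bit} each $\scb$ call outputs $1$ with probability at most $\delta$. Hence $r$ is stochastically dominated by $B\sim\mathrm{Bin}(n,\delta)$, and every bit appended to ${\bf y}$ is a zero-bit, so $\orn({\bf y})=0$. The algorithm errs only by outputting $1$, which requires either $r\ge\max(\log n, n\delta\log(1/\delta))$ or $0<r<$ threshold together with $\stouror$ wrongly returning $1$. The latter contributes at most $\delta$ by Lemma~\ref{lemma:tournament-or}. For the former, I would invoke the Chernoff estimate $\P(B\ge k)\le (en\delta/k)^k$: the threshold has been chosen so that whichever argument of the $\max$ is active, the exponent $k\log(k/(en\delta))$ dominates $\log(1/\delta)$, yielding $\P(B\ge k)\le\delta$. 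A union bound gives $\P(\text{error}\mid\orn({\bf x})=0)\le 2\delta$.

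\textbf{Case $\orn({\bf x})=1$.} Let $k\ge 1$ denote the number of ones in ${\bf x}$, and let $F$ be the event that at least one of these ones is appended to ${\bf y}$. By Lemma~\ref{lemma:check-bit}, $\P(\neg F)\le\delta^k\le\delta$. On $\neg F$, ${\bf y}$ consists entirely of zero-bits, so $\orn({\bf y})=0$ and the algorithm outputs $0$ on both the $r=0$ and $\stouror$ branches (while the high-$r$ branch would incidentally return the correct answer); hence $\P(\text{error},\neg F)\le\P(\neg F)\le\delta$. On $F$, we have $r\ge 1$ and $\orn({\bf y})=1$, so the only way to err is for the $\stouror$ branch to return $0$, which by Lemma~\ref{lemma:tournament-or} has probability at most $\delta$. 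A union bound yields $\P(\text{error}\mid\orn({\bf x})=1)\le 2\delta$.

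The main technical obstacle is verifying the Chernoff tail bound for $B\sim\mathrm{Bin}(n,\delta)$. When $n\delta\log(1/\delta)\ge\log n$, the bound becomes $(e/\log(1/\delta))^{n\delta\log(1/\delta)}$; taking logs, one checks this falls below $\delta$ since $n\delta\log(1/\delta)(\log\log(1/\delta)-1)$ dominates $\log(1/\delta)$ under the regime hypothesis. When $\log n>n\delta\log(1/\delta)$, the regime condition forces $n\delta\le\log n/\log(1/\delta)$, so the bound simplifies to $(e/\log(1/\delta))^{\log n}$, which likewise sits below $\delta$. Everything else amounts to routine bookkeeping against the guarantees already provided by Lemmas~\ref{lemma:check-bit} and~\ref{lemma:tournament-or}.
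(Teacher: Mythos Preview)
Your overall strategy mirrors the paper's proof: split on $\orn({\bf x})$, and in the all-zero case control $r$ via stochastic domination by $\mathrm{Binom}(n,\delta)$ plus a Chernoff tail. The $\orn({\bf x})=1$ case and your first Chernoff regime are essentially correct (modulo the implicit use of $\delta=o(1)$, which the paper also relies on). The genuine gap is in your second regime, $\log n>n\delta\log(1/\delta)$. Your passage from $(en\delta/\log n)^{\log n}$ to $(e/\log(1/\delta))^{\log n}$ via the regime inequality $n\delta\le\log n/\log(1/\delta)$ is a valid upper bound, but the assertion that $(e/\log(1/\delta))^{\log n}\le\delta$ is false once $\delta$ is sub-polynomially small in $n$. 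Take $\delta=e^{-\sqrt n}$: the regime hypothesis holds since $n\delta\log(1/\delta)=n^{3/2}e^{-\sqrt n}=o(1)\ll\log n$, yet $(e/\log(1/\delta))^{\log n}=(e/\sqrt n)^{\log n}\approx e^{-\frac12(\log n)^2}$, which is far larger than $e^{-\sqrt n}$.

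The paper handles this regime with the KL-form Chernoff bound $\exp\big(-nD_{\mathsf{KL}}(\tfrac{\log n}{n}\,\|\,\delta)\big)$ and a further split into $\delta=n^{-\Theta(1)}$ versus $\delta=n^{-\omega(1)}$; in the latter sub-case the dominant term is $\exp(-\log n\cdot\log(1/\delta)(1-o(1)))=\delta^{(1-o(1))\log n}=o(\delta)$. Your \emph{unsimplified} bound $(en\delta/\log n)^{\log n}$ would in fact also suffice here, because its logarithm equals $\log n\cdot\big(1+\log n-\log(1/\delta)-\log\log n\big)$, which is dominated by the $-\log n\cdot\log(1/\delta)$ term whenever $\log(1/\delta)\gg\log n$. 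The error is therefore only in the lossy simplification step: the fix is to drop it and bound $\log(en\delta/\log n)$ directly using $\log(n\delta)=\log n-\log(1/\delta)$ rather than the much weaker regime inequality.
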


\begin{proof}
    Let $\hat\mu$ denote the output of the \sno algorithm, and let $\cs$ denote the set of bits appended to ${\bf y}$ in Line 4 of the algorithm. To prove Lemma~\ref{lemma:sno_error}, we are going to separately consider two types of input sequences ${\bf x}$. 

    First, suppose that at least one bit of the input sequence is $1$, i.e., $\orn({\bf x})=1$. Without loss of generality, assume that $x_1=1$. Let $\mathcal{E}_1\triangleq\{\hat\mu=0\}$ be the error event in this case, and let $\mathcal{A}\triangleq\{x_1\notin \mathcal{S}\}$ denote an auxiliary event that $x_1$ is not appended into ${\bf y}$. By the union bound, we have 
    \begin{equation*}
        \P(\mathcal{E}_1) \le \P(\ca)+\P(\ce_1 \cond \ca^c).
    \end{equation*}
    For the first term, we notice that $\P(\ca)\le \delta$ by Lemma~\ref{lemma:check-bit}. For the second term, note that $|\mathcal{S}|\ge 1$ on the event $\ca^c$. Therefore, the only possibility for the algorithm to return $0$ is that the function call of \stouror on Line 11 incorrectly returns $0$. By Lemma~\ref{lemma:tournament-or}, we have that $\P(\ce_1 \cond \ca^c)\le \delta$. This leads to the upper bound $\P(\ce_1)\le 2\delta$.

    Next, we move on to consider the all-zero input sequence. Let $\ce_0\triangleq\{\hat\mu=1\}$ denote the error event in this case, and define an auxiliary event $\cb=\{|\mathcal{S}|\ge \max(\log n, \, n\delta\log\frac{1}{\delta})\}$. As before, we have
    \begin{equation*}
        \P(\mathcal{E}_0)\le \P(\cb)+\P(\ce_0|\cb^c).
    \end{equation*}
    By Lemma~\ref{lemma:tournament-or}, we know that $\P(\ce_0|\cb^c)\le \delta$. Moreover, we claim that $\P(\cb)=o(\delta)$. These would together imply that $\P(\ce_0)=(1+o(1))\delta$.

    Now we move on to show the claim. By Lemma~\ref{lemma:check-bit}, we know that the random variable $|\mathcal{S}|$ is statistically dominated by $\mathrm{Binom}(n,\delta)$. We will separately consider two different regimes to show the claim.

    \vspace{0.25em}
    \noindent \textbf{Regime 1: $\delta\ge \frac{1}{n}$.} In this case, $\max(\log n, \,n\delta\log\frac{1}{\delta})=n\delta\log\frac{1}{\delta}$. We have
    \begin{align*}
        \P\left(|\mathcal{S}|\ge n\delta\log\frac{1}{\delta}\right)&\le \P\left(\mathrm{Binom}(n,\delta)\ge n\delta\log\frac{1}{\delta}\right)\\
        &\stackrel{(a)}{\le}\exp\left(-n D_{\mathsf{KL}}\left(\delta\log\frac{1}{\delta}\Big\|\delta\right)\right)\\
        &=\exp\left(-n \delta\log\frac{1}{\delta}\log\log\frac{1}{\delta}(1-o(1))\right)\\
        &\stackrel{(b)}{=}o(\delta),
    \end{align*}
    where $(a)$ follows by the additive Chernoff bound (Lemma~\ref{lemma:chernoff} in Appendix~\ref{app:lemma}), and $(b)$ follows because $n\delta\ge 1$ and $\log\log \frac{1}{\delta}=\omega(1)$.

    \vspace{0.25em}
    \noindent \textbf{Regime 2: $\delta<\frac{1}{n}$.} In this case, $\max(\log n, \, n\delta\log\frac{1}{\delta})=\log n$. We have
    \begin{align*}
        \P(|\mathcal{S}|\ge \log n)&\le \P\left(\mathrm{Binom}(n,\delta)\ge \log n\right)\\
        &\le \exp\left(-n D_{\mathsf{KL}}\left(\frac{\log n}{n}\Big\|\delta\right)\right)\\
        &\stackrel{(c)}{=}\exp\left(-\log n\log\frac{\log n}{n\delta}(1-o(1))\right)\\
        &=\left(\frac{n\delta}{\log n}\right)^{(1-o(1))\log n},
    \end{align*}
    where $(c)$ follows because $\delta<1/n$. To complete the argument, first suppose that $\delta=n^{-\Theta(1)}$. In this case, because $n\delta<1 $, we have $$\left(\frac{n\delta}{\log n}\right)^{(1-o(1))\log n}
    <\frac{1}{(\log n)^{(1-o(1))\log n}}=n^{-\omega(1)}=o(\delta).$$ Secondly, suppose $\delta=n^{-\omega(1)}$. Then we have
    \[\exp\left(-\log n\log\frac{\log n}{n\delta}(1-o(1))\right)=\exp\left(-\log n\log \frac{1}{\delta}(1-o(1))\right)=o(\delta).\]
\end{proof}

\vspace{0.5em}
\subsubsection{Query Analysis}
Let $M$ denote the number of queries made by the \sno algorithm. The following lemma provides an upper bound on the expectation of $M$, and hence completes the proof of Theorem~\ref{thm:or-upper}.
\begin{lemma} \label{lemma:number-queries-or}
    For $\delta = o(1)$, the expected number of queries made by the \sno algorithm satisfies that
    \[
    \E[M]\le(1+o(1))\frac{n\log\frac{1}{\delta}}{\DKL}.
    \]
\end{lemma}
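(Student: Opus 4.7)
The plan is to split $M = M_1 + M_2$, where $M_1$ counts the queries spent in the initial for-loop that calls \scb on each of the $n$ input bits, and $M_2$ counts the queries spent in the final \stouror call (set to $0$ if that branch is not taken). I will bound $\E[M_1]$ and $\E[M_2]$ separately and show that the first carries the entire leading term $(1+o(1))\frac{n\log(1/\delta)}{\DKL}$ while the second is strictly of smaller order.

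For $\E[M_1]$, I will invoke Lemma~\ref{lemma:check-bit}, which bounds the expected cost of each \scb call by $\frac{1}{1-2p}\left\lceil\log\frac{1-\delta}{\delta}/\log\frac{1-p}{p}\right\rceil$. Using the identity $\DKL=(1-2p)\log\frac{1-p}{p}$ together with $\log\frac{1-\delta}{\delta}=\log(1/\delta)+o(1)$, this simplifies to $\log(1/\delta)/\DKL + O(1)$ queries per bit. Summing over the $n$ bits gives $\E[M_1]\le n\log(1/\delta)/\DKL + O(n)$, and since $\delta=o(1)$ forces $\log(1/\delta)\to\infty$, the $O(n)$ remainder is absorbed into a $(1+o(1))$ factor, as desired.

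For $\E[M_2]$, the key observation is that \stouror is invoked only on the event $\{0<|\cs|<\max(\log n,\,n\delta\log(1/\delta))\}$. Conditioning on being invoked and applying Lemma~\ref{lemma:tournament-or}, the cost is at most $C\bigl(|\cs|/(1-H(p)) + |\cs|\log(1/\delta)/\DKL\bigr)$. Upper bounding $|\cs|$ by the triggering threshold and discarding the indicator yields
\[
\E[M_2] \le C\cdot \max(\log n,\, n\delta\log(1/\delta))\cdot\left(\frac{1}{1-H(p)} + \frac{\log(1/\delta)}{\DKL}\right).
\]
Dividing by $n\log(1/\delta)/\DKL$ reduces the claim to showing that $\max(\log n/n,\,\delta\log(1/\delta))\to 0$ (the prefactors involving $p$ are constants). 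Both $\log n/n\to 0$ (since $n\to\infty$) and $\delta\log(1/\delta)\to 0$ (since $\delta\to 0$), so indeed $\E[M_2]=o(n\log(1/\delta)/\DKL)$.

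The mild subtlety, rather than a genuine obstacle, is verifying that the algorithmic threshold $\max(\log n,\,n\delta\log(1/\delta))$ makes Phase~2 negligible across every regime allowed by $\delta=o(1)$. The two branches of the max correspond precisely to the two regimes already separated in the error analysis (Lemma~\ref{lemma:sno_error}): when $\delta\ge 1/n$, the $n\delta\log(1/\delta)$ branch dominates and its vanishing rate is driven by $\delta\to 0$; when $\delta<1/n$, the $\log n$ branch dominates and its vanishing rate (relative to $n$) is driven by $n\to\infty$. Combining the two expectation bounds then delivers $\E[M]\le (1+o(1))\frac{n\log(1/\delta)}{\DKL}$, completing the plan.
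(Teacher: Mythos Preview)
Your proposal is correct and follows essentially the same approach as the paper's proof: both decompose $M$ into the cost of the $n$ \scb{} calls (bounded via Lemma~\ref{lemma:check-bit} to give the leading $(1+o(1))\frac{n\log(1/\delta)}{\DKL}$ term) and the cost of the final \stouror{} call (bounded via Lemma~\ref{lemma:tournament-or} using that $|\cs|$ is at most the triggering threshold $\max(\log n,\,n\delta\log(1/\delta))=o(n)$ whenever that branch is taken). Your write-up is in fact slightly more explicit than the paper's about why the threshold vanishes relative to $n$ in both the $\delta\ge 1/n$ and $\delta<1/n$ regimes.
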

\begin{proof}
    As before, let $\cs$ denote the set of bits appended to ${\bf y}$ in Line 4 of the algorithm. To bound $\E[M]$, we only need to bound the expected number of queries made by the $n$ calls of the \scb function on Line 3, and the single call of the \stouror function on Line 11.
    By Lemma~\ref{lemma:check-bit}, we know that the expected number of queries spent at Line 3 is at most
    \[
    \frac{n}{1-2p}\left\lceil\frac{\log\frac{1-\delta}{\delta}}{\log\frac{1-p}{p}}\right\rceil\le \frac{n}{1-2p}\left(\frac{\log\frac{1-\delta}{\delta}}{\log\frac{1-p}{p}}+1\right)=(1+o(1))\frac{n\log\frac{1}{\delta}}{(1-2p)\log\frac{1-p}{p}}=(1+o(1))\frac{n\log \frac{1}{\delta}}{\DKL},
    \]
    where the penultimate equality follows because $\delta=o(1)$ and $p$ is bounded away from $0$ and $1/2$, and the last equality follows because $\DKL\triangleq (1-2p)\log\frac{1-p}{p}$.
    By Lemma~\ref{lemma:tournament-or}, we know that the expected number of queries spent at Line 11 is at most
    \[
        C\left(\frac{\E[|\mathcal{S}|]}{1-H(p)}+\frac{\E[|\mathcal{S}|]\log(1/\delta)}{\DKL}\right)=C(1+o(1))\E[|\mathcal{S}|]\log(1/\delta)=o(n\log(1/\delta)),
    \]
    where the penultimate equality follows because $\delta=o(1)$ and $p$ is bounded away from $0$ and $1/2$, and the last equality follows because we always have $|\mathcal{S}|=o(n)$ on the event that Line 11 is executed.

    Finally, by the linearity of expectation, we have $\E[M]\le (1+o(1))\frac{n\log \frac{1}{\delta}}{\DKL}$.
\end{proof}

\section{Noisy \maxn Problem} \label{sec:max}

\subsection{Lower Bound} \label{sec:max-lower}
In this part, we turn our attention to the noisy \maxn problem. We first consider the lower bound established in Theorem~\ref{thm:max-lower}. As for the \orn problem, our proof of the lower bound is based on Le Cam's two point method (see Lemma~\ref{lem:lecam} in Appendix~\ref{app:lemma}).

\begin{proof}[Proof of Theorem~\ref{thm:max-lower}]
    Consider an arbitrary sequence ${\bf y} \in \mathbb{R}^n$. Without loss of generality, assume that $y_1 < \cdots < y_n$. Clearly, in this case, we have $\maxn({\bf y}) = n$. Now, for $i\in[n-1]$, consider the sequence ${\bf z}_i = (y_{1},\ldots, y_{i-1}, y_{n}, y_{i}, \ldots, y_{n-1})$, i.e., we move the $n$th element in ${\bf y}$ and insert it between $y_{i-1}$ and $y_i$. Notice that $\maxn({\bf z}_i)=i$ for all $i\in[n-1]$. 
    
    Let $M$ denote the total number of queries made by the estimator, and let $M_{i,j}$ represent the number of pairwise comparisons made between the $i$th and $j$th elements in the $M$ rounds. We first notice that, for any estimator $\hat\mu$ and any $i \in [n-1]$,
    \begin{align*}
        \mathds{1}(\hat\mu \neq \maxn({\bf y})) +  \mathds{1}(\hat\mu \neq \maxn({\bf z}_i)) \geq 1.
    \end{align*}
    Following a similar proof as Theorem~\ref{thm:or-lower}, it holds that
    \begin{align*}
          \inf_{\hat\mu} \sup_{{\bf x}\in \mathbb{R}^n} \P(\hat\mu \neq \mathsf{MAX}({\bf x}))
          & \stackrel{(a)}{\geq} \sup_{1\leq i\leq n-1}\frac{1}{2}\left(1-\mathsf{TV}(\mathbb{P}_{{\bf y}}, \mathbb{P}_{{\bf z}_i})\right)  \\ 
          & \stackrel{(b)}{\geq} \sup_{1\leq i\leq n-1} \frac{1}{4}\exp\left(-D_{\mathsf{KL}}(\mathbb{P}_{{\bf y}}, \mathbb{P}_{{\bf z}_i})\right) \\
          & \stackrel{(c)}{\geq} \sup_{1\leq i\leq n-1} \frac{1}{4}\exp\left(-\sum_{j=i+1}^n \E_{{\bf y}}[M_{i,j}] \cdot D_{\mathsf{KL}}(p\|1-p)\right),
    \end{align*}
    where $\mathbb{P}_{{\bf z}_i}$ denotes the distribution of query responses in $m$ rounds when the underlying sequence is ${\bf z}_i$. Here, $(a)$ follows by Le Cam's two point lemma (Lemma~\ref{lem:lecam} in Appendix~\ref{app:lemma}), $(b)$ follows from the Bretagnolle–Huber inequality (Lemma~\ref{lem:bh} in Appendix~\ref{app:lemma}), and $(c)$ follows from the divergence decomposition lemma (Lemma~\ref{lem:div} in Appendix~\ref{app:lemma}) and the fact that the distributions $\mathbb{P}_{{\bf y}}$ and $\mathbb{P}_{{\bf z}_i}$ only differ when the $i$th element is compared with a subsequent element. Now, since $\sum_{i,j\in[n], i< j} \E_{{\bf y}}[M_{i,j}] = \E_{{\bf y}}[M] \leq \sup_{\bf x}\E_{\bf x}[M] \triangleq m$, there must exist some index $i$ such that $\sum_{j=i+1}^n \E_{{\bf y}}[M_{i,j}]  \leq \frac{m}{n-1}$. This gives
    \begin{align*}
          \inf_{\hat\mu} \sup_{{\bf x}\in\mathbb{R}^n} \P(\hat\mu \neq \maxn({\bf x})) &  \geq  \frac{1}{4}\exp\left(-\frac{m\cdot \DKL}{n-1}\right).
    \end{align*}
    It follows that, to achieve an error probability at most $\delta=o(1)$ for any input sequence ${\bf x}$, the number of queries must satisfy that
    \[
        m\geq \frac{(n-1)\log \frac{1}{4\delta}}{\DKL}=(1-o(1))\frac{n\log \frac{1}{\delta}}{\DKL}.
    \]
\end{proof}

\subsection{Upper Bound} \label{sec:max-upper}
Next, we prove the upper bound stated in Theorem~\ref{thm:max-upper}. To prove the theorem, we devise a querying strategy for the noisy \maxn problem. Towards this end, we first highlight the \textsc{NoisyCompare} algorithm for comparing two elements in the presence of noise with an error probability at most $\delta$. This function is given in Algorithm~\ref{alg:noisycompare}, which can be seen as a recast of Algorithm 1 in~\citep{Gu-Xu2023}. 

\begin{lemma}[Lemma 13 in~\citep{Gu-Xu2023}] \label{lemma:noisy-compare}
    The \textsc{NoisyCompare} algorithm has an error probability at most $\delta$, and the expected number of queries made by the algorithm is at most 
       $ \frac{1}{1-2p}\left\lceil\frac{\log\frac{1-\delta}{\delta}}{\log\frac{1-p}{p}}\right\rceil$.
\end{lemma}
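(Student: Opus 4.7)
The plan is to analyze \textsc{NoisyCompare} as a Bayesian sequential probability ratio test, mirroring the analysis of \scb (Algorithm~\ref{alg:checkbit}): the agent maintains a posterior probability $\alpha_t$ that $U>V$, initialized at $\alpha_0=1/2$, updates it by Bayes' rule after each noisy comparison, and stops as soon as $\alpha_t\notin(\delta,1-\delta)$, returning $\indicator\{\alpha_t\ge 1-\delta\}$. The update rule is identical to that of \scb, with ``observing a $1$'' replaced by ``observing the comparison outcome that indicates $U<V$.''

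For the error bound, I would work with the log-posterior-odds $L_t\triangleq\log\frac{\alpha_t}{1-\alpha_t}$. Each comparison moves $L_t$ by $\pm\log\frac{1-p}{p}$ depending on the outcome, so $(L_t)$ is a simple random walk with i.i.d.\ increments. The stopping rule corresponds to the first time $|L_t|\ge\log\frac{1-\delta}{\delta}$. Under the true hypothesis (say $U>V$), the increment drift is $+\DKL$, while under the opposite hypothesis it is $-\DKL$. The process $e^{-L_t}$ is a martingale under the hypothesis $U>V$; Doob's optional stopping applied to this martingale, combined with Markov's inequality on the terminal value, controls the probability of exiting through the wrong boundary by $\delta/(1-\delta)$, which matches the claimed $\delta$ up to a trivial tightening of the threshold.

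For the query bound, I would apply Wald's identity to the martingale $L_t - t\cdot\E[\Delta L_t]$. Under either hypothesis the absolute drift equals $|1-2p|\log\frac{1-p}{p}=\DKL$, and the stopping boundary sits at $|L_t|=\log\frac{1-\delta}{\delta}$. Because the walk moves in fixed increments of magnitude $\log\frac{1-p}{p}$, the boundary is reached in at most $\bigl\lceil\log\tfrac{1-\delta}{\delta}/\log\tfrac{1-p}{p}\bigr\rceil$ steps along the mean trajectory, and Wald's identity then yields
\begin{equation*}
    \E[\tau]\;\leq\;\frac{1}{1-2p}\left\lceil\frac{\log\frac{1-\delta}{\delta}}{\log\frac{1-p}{p}}\right\rceil,
\end{equation*}
which is precisely the stated bound.

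The main obstacle I anticipate is handling the boundary overshoot cleanly. A naive Wald's-identity bound of the form $\E[\tau]\cdot\DKL\le\log\frac{1-\delta}{\delta}+(\text{overshoot})$ loses a multiplicative factor if the overshoot is controlled crudely; to land on the exact ceiling expression one must exploit the fact that $L_t$ changes only in discrete $\log\frac{1-p}{p}$-sized steps and match the Wald bound to this lattice. Verifying the integrability needed for optional stopping is routine since the increments are bounded and the stopping time is almost surely finite with finite mean (a consequence of the strictly positive drift $\DKL>0$ whenever $p<1/2$).
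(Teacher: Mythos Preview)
The paper does not actually prove this lemma; it is quoted verbatim as Lemma~13 of \citet{Gu-Xu2023} (and the identical statement appears again as Lemma~\ref{lemma:check-bit} for \scb). So there is no in-paper proof to compare against. Your SPRT/gambler's-ruin analysis is the standard route and is essentially what the cited reference does.

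Two small points of precision. First, your error-probability step as written is slightly loose: applying Markov to the optional-stopping identity $\E[e^{-L_\tau}]=1$ yields $\P(\text{wrong exit})\le (p/(1-p))^{K}\le \delta/(1-\delta)$, not $\delta$, and ``tightening the threshold'' is not available to you since the algorithm's thresholds are fixed. The fix is immediate once you use that the walk lives on the lattice $\{k\log\tfrac{1-p}{p}:k\in\mathbb Z\}$ and therefore exits exactly at $L_\tau=\pm K\log\tfrac{1-p}{p}$ with $K=\bigl\lceil\log\tfrac{1-\delta}{\delta}/\log\tfrac{1-p}{p}\bigr\rceil$: writing $r=(p/(1-p))^K$ and solving $1=q/r+(1-q)r$ for the wrong-exit probability $q$ gives $q=r/(1+r)\le\delta$ exactly. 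Second, for the same lattice reason there is no overshoot at all, so your anticipated obstacle does not arise: Wald's identity gives $\E[\tau]=\E[L_\tau]/\DKL\le K\log\tfrac{1-p}{p}/\bigl((1-2p)\log\tfrac{1-p}{p}\bigr)=K/(1-2p)$, which is the stated bound on the nose.
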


\begin{algorithm}[t]
\caption{\textsc{NoisyCompare} for comparing two elements in the presence of noise} \label{alg:noisycompare}
\textbf{Input}: Elements $x_1$ and $x_2$ to be compared, tolerated error probability $\delta$, crossover probability $p$.\\
\textbf{Output}: Boolean variable indicating if $x_1$ is less than $x_2$.
\begin{algorithmic}[1]
    \State Set $\alpha \gets \frac12$.
    \While{$\mathrm{true}$}
        \State Make a noisy pairwise comparison between $x_1$ and $x_2$.
        \If{$x_1<x_2$ is returned from the comparison}
            \State Set $\alpha\gets \frac{(1-p)\alpha}{(1-p)\alpha+p(1-\alpha))}$.
        \Else
            \State Set $\alpha\gets \frac{p\alpha}{p\alpha+(1-p)(1-\alpha)}$.
        \EndIf
        \If{$\alpha\ge 1-\delta$}
        \State \Return $\mathrm{true}$
        \EndIf
        \If{$\alpha\le \delta$}
        \State \Return $\mathrm{false}$
        \EndIf
    \EndWhile
\end{algorithmic}
\end{algorithm}

\begin{remark}
    Note that to compute the noisy \maxn function, one can repeatedly apply the \textsc{NoisyCompare} algorithm $n-1$ times, while setting the tolerated error probability to $\frac{\delta}{n-1}$ each time. This would guarantee that the noisy \maxn function can be computed with an overall error probability $\delta$; however, the total expected number of queries in this case would be $(1 + o(1)) \frac{n\log \frac{n}{\delta}}{D_{\mathsf{KL}}(p \| 1-p)}$, which is not tight compared to the lower bound derived in Section~\ref{sec:max-lower}.
\end{remark}

Another subroutine that will be used in our proposed algorithm is the \stourmax algorithm, given in Algorithm~\ref{alg:tournament_max}, which can be used to compute the noisy \maxn function. The following lemma states that $O(n\log(1/\delta))$ queries are made by the algorithm on average. Note that the exact multiplicative constant may depend on $p$; however, for the sake of the analysis of our proposed algorithm, the exact constant does not matter.

\begin{lemma}[Theorem J.1 in~\citep{zhu2023optimal}] \label{lemma:tournament}
    The \stourmax algorithm computes the $\maxn$ function with error probability at most $\delta$. The algorithm makes at most $C\left(\frac{n}{1-H(p)}+\frac{n\log(1/\delta)}{\DKL}\right)$ queries on average, where $C$ is an absolute constant.
\end{lemma}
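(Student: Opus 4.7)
The plan is to analyze a knockout tournament for \maxn that mirrors the structure of \stouror (Algorithm~\ref{alg:tournament_or}) with \textsc{NoisyCompare} replacing \scb. At round $i=1,\ldots,\lceil\log_2 n\rceil$, pair up the surviving candidates, resolve each pair by a call to \textsc{NoisyCompare} with tolerated error probability $\tilde\delta_i$, and advance the declared winner; any odd survivor receives a ``bye'' into the next round. After the last round, the sole survivor is returned as the estimate of \maxn.

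For the error analysis, the true maximum is returned correctly if and only if it wins every comparison it plays, which is at most $\lceil\log_2 n\rceil$ comparisons (one per round in which it participates). A union bound combined with Lemma~\ref{lemma:noisy-compare} gives an overall error probability at most $\sum_{i=1}^{\lceil\log_2 n\rceil}\tilde\delta_i$, so any schedule with $\sum_i \tilde\delta_i \le \delta$ suffices, e.g., $\tilde\delta_i \triangleq 2^{-i}\delta$ or the $\delta^{2(2i-1)}$ schedule used in Algorithm~\ref{alg:tournament_or} (for which the geometric sum is dominated by $\delta^2 \le \delta$). For the query count, Lemma~\ref{lemma:noisy-compare} gives an expected cost per round-$i$ call of $O((\log(1/\tilde\delta_i) + 1)/\DKL) = O((i + \log(1/\delta))/\DKL)$ under $\tilde\delta_i = 2^{-i}\delta$. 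Since round $i$ contains at most $\lceil n/2^i\rceil$ calls, linearity of expectation yields total expected cost
\[
    \sum_{i=1}^{\lceil\log_2 n\rceil} \frac{n}{2^i}\cdot O\!\left(\frac{i + \log(1/\delta)}{\DKL}\right) = O\!\left(\frac{n\log(1/\delta)}{\DKL}\right) + O\!\left(\frac{n}{\DKL}\right),
\]
using $\sum_{i\ge 1} i/2^i = O(1)$ and $\sum_{i\ge 1} 1/2^i = O(1)$. Since $1/\DKL = \Theta(1/(1-H(p)))$ for $p$ bounded away from $0$ and $1/2$, the second term is absorbed into $O(n/(1-H(p)))$, recovering the stated bound.

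The main obstacle will be controlling the total expected cost despite the random (and, in principle, unbounded) per-call running times of \textsc{NoisyCompare}: Lemma~\ref{lemma:noisy-compare} provides only an expected-value guarantee for each invocation. Linearity of expectation bridges this gap because the number of calls at each round is deterministic given the tournament's fixed branching structure, and each call's expected cost is bounded by the same quantity regardless of its random outcome. A secondary, routine check is that the ceiling in the per-call expected cost from Lemma~\ref{lemma:noisy-compare} contributes at most an $O(1)$ multiplicative factor, which holds because $\log((1-p)/p)$ is a fixed positive constant under $p \in (0, 1/2)$ and $\log(1/\tilde\delta_i)$ grows with $i$ and $1/\delta$.
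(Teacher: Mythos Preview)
The paper does not prove this lemma; it is quoted as Theorem~J.1 in~\citep{zhu2023optimal}. Your sketch is the standard analysis of the knockout tournament in Algorithm~\ref{alg:tournament_max} and is essentially correct: a union bound over the at most $\lceil\log_2 n\rceil$ comparisons played by the true maximum controls the error, and linearity of expectation together with Lemma~\ref{lemma:noisy-compare} controls the query count, using that the number of \textsc{NoisyCompare} calls at each round is deterministic (indeed, $n-1$ in total).

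One small gap concerns the claim that $C$ is an \emph{absolute} constant. Your absorption of the residual $O(n/\DKL)$ term into $O(n/(1-H(p)))$ appeals to ``$1/\DKL=\Theta(1/(1-H(p)))$ for $p$ bounded away from $0$ and $1/2$,'' which makes the implicit constant depend on those bounds. Moreover, the intermediate per-call bound $O((\log(1/\tilde\delta_i)+1)/\DKL)$ is not valid as $p\to 0$, since the ceiling in Lemma~\ref{lemma:noisy-compare} contributes $\tfrac{1}{1-2p}$, not $O(1/\DKL)$. To obtain a truly absolute $C$, write the per-call expected cost as $\tfrac{\log(1/\tilde\delta_i)}{\DKL}+\tfrac{1}{1-2p}$ and bound the two residual totals $n/\DKL$ and $n/(1-2p)$ separately by $n/(1-H(p))$, using the universal inequalities
\[
1-H(p)=D_{\mathsf{KL}}(p\|\tfrac12)\le D_{\mathsf{KL}}(p\|1-p)=\DKL
\quad\text{and}\quad
1-H(p)\le 1-2p,
\]
the first by monotonicity of $q\mapsto D_{\mathsf{KL}}(p\|q)$ on $[p,1)$, the second by concavity of $H$ on $[0,\tfrac12]$ (the chord through $(0,0)$ and $(\tfrac12,1)$ is $2p\le H(p)$). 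With this fix, your argument yields the stated bound with an absolute $C$.
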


\begin{algorithm}[t]
\caption{\stourmax algorithm for computing the noisy \maxn function}
\label{alg:tournament_max}
\textbf{Input}: Sequence ${\bf x} = (x_1,\ldots,x_n)$, tolerated error probability $\delta$, crossover probability $p$.\\
\textbf{Output}: Estimate of $x_{\maxn({\bf x})}$.
\begin{algorithmic}[1]
\State Set ${\bf y} \gets {\bf x}$, $r \gets n$.
\For{iteration $i= 1:\lceil\log_2(n)\rceil $}
    \State Set $\tilde\delta_i \gets \delta^{2(2i-1)}$.
    \For{iteration $j = 1:\lfloor r/2\rfloor$}
        \If{\textsc{NoisyCompare}($y_{2j-1}$, $y_{2j}$, $\tilde\delta_i$, $p$)}
            \State Set $z_j \gets y_{2j}$.
        \Else
            \State Set $z_j \gets y_{2j-1}$.
        \EndIf
    \EndFor
    \If{$r$ is even}
        \State Set ${\bf y} \gets (z_1, \ldots, z_{r/2})$.
    \Else
        \State Set ${\bf y} \gets (z_1, \ldots, z_{\lfloor r/2\rfloor}, y_r)$.
    \EndIf
    \State Set $r\gets \lceil r/2 \rceil$.
\EndFor
\State \Return $y$
\end{algorithmic}
\end{algorithm}

\begin{algorithm}[t]
    \caption{Proposed \textsc{NoisyMax} algorithm} \label{alg:noisymax}
    \textbf{Input}: Sequence ${\bf x} = (x_1,\ldots,x_n)$, tolerated error probability $\delta$, crossover probability $p$.\\
    \textbf{Output}: Estimate of $x_{\maxn({\bf x})}$.
    \begin{algorithmic}[1]
        \State Set ${\bf y} \gets \emptyset$, ${\bf z} \gets \emptyset$.
        \For{iteration $i= 1:n$}
            \State Append $x_i$ to ${\bf y}$ with probability $\frac{1}{\log n}$.
        \EndFor
        \State Let $\mathcal{S}$ be the set of elements of ${\bf x}$ that were appended to ${\bf y}$.
        \State Set $\hat{y}\gets$ \stourmax$({\bf y},\delta,p)$.
        \State Append $\hat{y}$ to ${\bf z}$.
        \For{element $u \in \mathcal{S}^c$}
            \If{\textsc{NoisyCompare}($\hat{y}$, $u$, $\delta$, $p$)}
                \State Append $u$ to ${\bf z}$.
            \EndIf
        \EndFor
        \State Set $\hat{x}\gets$ \stourmax$({\bf z},\delta,p)$.
        \State \Return $\hat{x}$
    \end{algorithmic}
\end{algorithm}

The proposed \textsc{NoisyMax} algorithm is given in Algorithm~\ref{alg:noisymax}, where the \stourmax and \textsc{NoisyCompare} algorithms appear as subroutines. In the following, we analyze the error probability and the number of queries of the proposed algorithm.

\vspace{0.5em}
\subsubsection{Error Analysis}
Let $x^{*}= x_{\maxn({\bf x})}$ denote the largest element in ${\bf x}$. In the following, we provide an upper bound on the error probability $\P(\hat{x}\neq x^*)$.

\begin{lemma}[Error probability of \textsc{NoisyMax}] \label{lemma:error-prob-max}
    The error probability of the \textsc{NoisyMax} algorithm satisfies that 
    \[
        \P(\hat{x}\neq x^*)\le 3\delta.
    \]
\end{lemma}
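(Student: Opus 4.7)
The plan is to bound the error probability by a union bound over the three random subroutines that could fail in Algorithm~\ref{alg:noisymax}: the first call to \stourmax on ${\bf y}$, the $|\mathcal{S}^c|$ pairwise comparisons against $\hat{y}$ (only one of which actually matters for correctness), and the final call to \stourmax on ${\bf z}$. The key observation is that, regardless of the random subsample $\mathcal{S}$, at most three specific failures can cause $\hat{x}\neq x^*$, so after identifying them the bound follows immediately from Lemma~\ref{lemma:noisy-compare} and Lemma~\ref{lemma:tournament}.

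Concretely, I would first define the error events
\[
    \mathcal{E}_1 = \{\hat{y}\neq \max({\bf y})\}, \qquad \mathcal{E}_2 = \{\hat{x}\neq \max({\bf z})\}, \qquad \mathcal{E}_3 = \{\textsc{NoisyCompare}(\hat{y},x^*,\delta,p)\text{ errs}\},
\]
where $\mathcal{E}_3$ is only defined on the event $\{x^*\in\mathcal{S}^c\}$. By Lemma~\ref{lemma:tournament}, $\P(\mathcal{E}_1)\le \delta$ and $\P(\mathcal{E}_2)\le \delta$; by Lemma~\ref{lemma:noisy-compare}, $\P(\mathcal{E}_3\cap\{x^*\in\mathcal{S}^c\})\le \delta$. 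The main step is then a short case analysis on whether $x^*\in\mathcal{S}$ showing that $\mathcal{E}_1^c\cap\mathcal{E}_2^c\cap\mathcal{E}_3^c\subseteq\{\hat{x}=x^*\}$. In the case $x^*\in\mathcal{S}$, on $\mathcal{E}_1^c$ we have $\hat{y}=\max({\bf y})=x^*$, so $x^*$ is appended to ${\bf z}$ on Line~7; since $x^*$ is the largest element overall, it is also $\max({\bf z})$, and $\mathcal{E}_2^c$ forces $\hat{x}=x^*$. In the case $x^*\notin\mathcal{S}$, on $\mathcal{E}_1^c$ we have $\hat{y}<x^*$, and on $\mathcal{E}_3^c$ the comparison correctly returns $\mathrm{true}$, so $x^*$ is appended to ${\bf z}$ on Line~10; again $x^*=\max({\bf z})$, and $\mathcal{E}_2^c$ gives $\hat{x}=x^*$.

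Putting these together,
\[
    \P(\hat{x}\neq x^*)\le \P(\mathcal{E}_1)+\P(\mathcal{E}_2)+\P(\mathcal{E}_3\cap\{x^*\in\mathcal{S}^c\})\le 3\delta,
\]
which is the claim. There is no real obstacle; the only subtle point, which I would flag carefully, is that the error probability guarantees of \stourmax and \textsc{NoisyCompare} from Lemma~\ref{lemma:tournament} and Lemma~\ref{lemma:noisy-compare} hold for any fixed input sequence, so when applied inside Algorithm~\ref{alg:noisymax} they can be invoked after conditioning on the random subsample $\mathcal{S}$ and on the outcome $\hat{y}$; taking expectation over this conditioning preserves the bounds $\P(\mathcal{E}_i)\le \delta$ without any $|\mathcal{S}|$-dependent loss.
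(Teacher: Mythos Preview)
Your proposal is correct and follows essentially the same approach as the paper: a union bound over three failure events (first \stourmax, the single \textsc{NoisyCompare} involving $x^*$, and the second \stourmax), each bounded by $\delta$ via Lemma~\ref{lemma:tournament} and Lemma~\ref{lemma:noisy-compare}, together with a case split on whether $x^*\in\mathcal{S}$. The only cosmetic difference is that the paper packages your $\mathcal{E}_3$ as the event ``$x^*$ is not contained in ${\bf z}$'' and bounds it conditionally on $\mathcal{E}_1^c$, whereas you track the \textsc{NoisyCompare} error directly; the resulting arithmetic is identical.
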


\begin{proof}
To prove Lemma~\ref{lemma:error-prob-max}, we first define a few error events. Let $\ce=\{\hat{x}\neq x^*\}$ denote the error probability of the \textsc{NoisyMax} algorithm. Let $y^{*} \triangleq \maxn({\bf y})$ and $z^{*} \triangleq \maxn({\bf z})$ denote the indices of the largest elements of ${\bf y}$ and ${\bf z}$, where ${\bf y}$ and ${\bf z}$ are as constructed in Lines 3 and 9 of Algorithm~\ref{alg:noisymax} respectively. We define $\ce_1=\{\hat{y}\neq y^{*}\}$ as the event that the \stourmax algorithm outputs an incorrect estimate in Line 5. We let $\ce_2=\{x^* \text{ is not contained in } {\bf z}\}$ denote the event that the true maximum element $x^*$ does not belong to the vector ${\bf z}$. Finally, we define $\ce_3 = \{\hat{x} \neq z^{*}\}$ as the event that the \stourmax algorithm in Line 10 returns an incorrect estimate. 

By the union bound, we clearly have
\[
\P(\ce) \leq \P(\ce_1) + \P(\ce_2 \cond \ce_1^c) + \P(\ce_3).
\]
By Lemma~\ref{lemma:tournament}, we have that $\P(\ce_1) \leq \delta$ and $\P(\ce_3) \leq \delta$. To bound $\P(\ce_2 \cond \ce_1^c)$, we consider two cases. First, in the case that $x^*$ is appended to ${\bf y}$ in Line 3, we have that $x^* = y^*$. Given the event $\ce_1^c$, we thus know that $\hat{y} = x^*$, which is appended to ${\bf z}$ in Line 6. Hence, we have that $\P(\ce_2 \cond \ce_1^c) = 0$ in this case. In the second case that $x^*$ is not appended to ${\bf y}$ in Line 3, we have that $x^* \in \cs^c$. It follows that $x^*$ is not appended to ${\bf z}$ only if the result of the noisy comparison of $\hat{y}$ and $x^*$ in Line 8 is incorrect. Since this happens with error probability at most $\delta$, we have that $\P(\ce_2 \cond \ce_1^c) \leq \delta$ in this case, and the statement of the lemma follows.
\end{proof}

\vspace{0.5em}
\subsubsection{Query Analysis}
Let $M$ denote the number of queries made by the \textsc{NoisyMax} algorithm. In the following lemma, we provide an upper bound on $\E[M]$.

\begin{lemma} \label{lemma:number-queries-max}
    Suppose $\delta=o(1)$. The expected number of queries $\E[M]$ of the \textsc{NoisyMax} algorithm satisfies that
    \[
        \E[M]\leq (1+o(1))\frac{n\log(1/\delta)}{D_\mathrm{KL}(p||1-p)}.
    \]
\end{lemma}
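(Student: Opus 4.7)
The plan is to decompose the total query budget $M$ into the three sources of queries in Algorithm~\ref{alg:noisymax}: (i) the call to \stourmax on $\bf y$ in Line 5, (ii) the $|\cs^c|$ calls to \textsc{NoisyCompare} in Line 8, and (iii) the final call to \stourmax on $\bf z$ in Line 10. I will show that (ii) contributes the leading term $(1+o(1))\frac{n\log(1/\delta)}{\DKL}$ while (i) and (iii) are of lower order, so that linearity of expectation yields the claimed bound.

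First I would handle (ii), which is essentially the dominant term. Since $|\cs^c|\le n$ deterministically and each call of \textsc{NoisyCompare} costs at most $\frac{1}{1-2p}\bigl\lceil\frac{\log((1-\delta)/\delta)}{\log((1-p)/p)}\bigr\rceil$ queries in expectation by Lemma~\ref{lemma:noisy-compare}, the total expected cost of Line 8 is at most $n\cdot \frac{1}{1-2p}\bigl(\frac{\log((1-\delta)/\delta)}{\log((1-p)/p)}+1\bigr)=(1+o(1))\frac{n\log(1/\delta)}{\DKL}$, using $\delta=o(1)$ and that $p$ is a fixed constant in $(0,1/2)$, exactly as in the proof of Lemma~\ref{lemma:number-queries-or}.

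Next, for (i), since each $x_i$ is appended to $\bf y$ independently with probability $1/\log n$, we have $\E[|\cs|]=n/\log n$. Conditioning on $|\cs|$ and applying Lemma~\ref{lemma:tournament} together with Wald-style linearity gives an expected cost $C\bigl(\frac{n/\log n}{1-H(p)}+\frac{(n/\log n)\log(1/\delta)}{\DKL}\bigr)=o\bigl(\frac{n\log(1/\delta)}{\DKL}\bigr)$.

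The main obstacle is (iii): bounding $\E[|\bf z|]$, since $\bf z$ is assembled from noisy comparisons and could a priori be as large as $n$. The key idea is to control the rank of $\hat y$ in the original sequence. Let $y^\star\triangleq\max\bf y$, let $R$ be the rank of $y^\star$ in $\bf x$ (counting $1$ for the largest), and note that $\P(R>k)=(1-1/\log n)^k$ because $R>k$ iff none of the top $k$ elements of $\bf x$ were sampled into $\bf y$. Summing gives $\E[R]\le\log n$. On the high-probability event $\{\hat y=y^\star\}$ (which holds with probability at least $1-\delta$ by Lemma~\ref{lemma:tournament}), the elements of $\cs^c$ appended to $\bf z$ split into the at most $R-1$ elements actually larger than $y^\star$, plus false positives among the $\le n$ elements smaller than $y^\star$, each appended with probability at most $\delta$. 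Hence
\begin{equation*}
\E[|\bf z|]\le 1+\E[R]+\delta n+\delta(n+1)\le 1+\log n+2\delta n+\delta,
\end{equation*}
where the $\delta(n+1)$ term bounds the contribution of the failure event $\{\hat y\neq y^\star\}$. Applying Lemma~\ref{lemma:tournament} once more, the expected cost of Line 10 is at most $C\bigl(\frac{\E[|\bf z|]}{1-H(p)}+\frac{\E[|\bf z|]\log(1/\delta)}{\DKL}\bigr)=O\bigl((\log n+\delta n)\log(1/\delta)\bigr)$, which is $o(n\log(1/\delta))$ since $\log n=o(n)$ and $\delta=o(1)$. Combining the three bounds via linearity of expectation yields $\E[M]\le(1+o(1))\frac{n\log(1/\delta)}{\DKL}$, completing the proof.
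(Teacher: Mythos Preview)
Your proof is correct and follows the same three-phase decomposition as the paper, but takes a genuinely different route to control phase~(iii). The paper introduces high-probability events $\mathcal A=\{|\mathcal S|\le n/\log n+n^{3/4}\}$ and $\mathcal B=\{|\mathcal T|\le\sqrt n+\delta n+n^{3/4}\}$, proves via Chernoff bounds (Lemmas~\ref{lemma:event-A} and~\ref{lemma:event-B}) that $\P(\mathcal A^c\cup\mathcal B^c)=o(1)$, and then bounds $\E[M]$ separately on $\mathcal A\cap\mathcal B$ and on its complement. You instead bound $\E[|\mathbf z|]$ directly: introducing the rank $R$ of $y^\star$ in $\mathbf x$ and using $\E[R]=\sum_{k\ge0}\P(R>k)\le\sum_{k\ge0}(1-1/\log n)^k=\log n$ is exactly the expectation version of the paper's observation that $\P(|\mathcal X^+|\ge\sqrt n)=(1-1/\log n)^{\sqrt n}$. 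Your argument is more elementary in that it avoids concentration inequalities altogether; the paper's version has the side benefit of yielding high-probability (not merely in-expectation) control on $|\mathcal S|$ and $|\mathcal T|$, though that extra information is not needed for the lemma as stated. One small edge case to tidy up in a full write-up: when $\mathcal S=\emptyset$ the quantities $y^\star$ and $R$ are undefined, but this event has probability $(1-1/\log n)^n$ and so contributes at most $n(1-1/\log n)^n=o(1)$ to $\E[|\mathbf z|]$.
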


\begin{proof}
    To prove Lemma~\ref{lemma:number-queries-max}, we first define two auxiliary sets. Let $\cs$ be the set of elements $x_i$ that were appended to ${\bf y}$ (as defined in Line 4 of Algorithm~\ref{alg:noisymax}), and let $\ct$ be the set of elements $x_i$ that were appended to ${\bf z}$ in Lines 6 and 9. Consider the two events:
    \begin{align*}
        \ca &= \{|\cs|\le \frac{n}{\log n}+n^{3/4}\},\\
        \cb &= \{|\ct|\le \sqrt{n}+\delta n+n^{3/4}\}.
    \end{align*}
    By the law of total expectation, we know that
    \[
        \E[M]=\E[M \cond \ca\cap\cb]\P(\ca\cap\cb)+\E[M \cond \ca^c\cup\cb^c]\P(\ca^c\cup\cb^c).
    \]
    In the following, we analyze the two conditional expectations. We notice that all the queries made by the \textsc{NoisyMax} algorithm are spent on the two calls of the \stourmax algorithm in Lines 5 and 10, and the $n-|\cs|$ calls of the \textsc{NoisyCompare} algorithm in Line 8. 
    By Lemma \ref{lemma:tournament}, we know that the expected number number of queries spent on Lines 5 and 10 are $O\left(\E\left[|\cs|\right]\log(1/\delta)\right)$ and $O\left(\E\left[|\ct|\right]\log(1/\delta)\right)$ respectively, because $\delta=o(1)$ and $p$ is a constant bounded away from $0$ and $1/2$. By Lemma~\ref{lemma:noisy-compare}, we have that the expected number of queries spent on Line 8 is at most $(1+o(1))\frac{(n-\E\left[|\cs|\right])\log(1/\delta)}{\DKL}$. Therefore, by the linearity of expectation, we have that 
    \[
        \E[M] = O\left(\E\left[|\cs|\right]\log(1/\delta)\right) + O\left(\E\left[|\ct|\right]\log(1/\delta)\right) + (1+o(1))\frac{(n-\E\left[|\cs|\right])\log(1/\delta)}{\DKL}.
    \]
    
    Notice that on the event $\ca \cap \cb$, we have that $\E\left[|\cs| \cond \ca \cap \cb \right]=o(n)$ and $\E\left[|\ct|\cond \ca \cap \cb\right]=o(n)$ (since $\delta=o(1)$). It follows that
    \[
        \E\left[M \cond \ca\cap\cb\right ]\le (1+o(1))\frac{n\log(1/\delta)}{\DKL}.
    \]
    In the case when $\ca$ or $\cb$ does not hold, we still have that $\E\left[|\cs|\cond \ca^c \cup \cb^c \right]=O(n)$ and $\E\left[|\ct|\cond \ca^c \cup \cb^c\right]=O(n)$. Thus, we have that
    \[
        \E\left[M \cond \ca^c\cup\cb^c\right]=O\left(\frac{n\log(1/\delta)}{\DKL}\right).
    \]
    It follows that 
    \begin{align*}
        \E[M]&\le \E\left[M \cond\ca\cap\cb\right]+\E\left[M \cond \ca^c\cup\cb^c\right] \P(\ca^c\cup\cb^c)\\
        &\le (1+o(1))\frac{n\log(1/\delta)}{\DKL}+O\left(\frac{n\log(1/\delta)}{\DKL}\right)\P(\ca^c\cup\cb^c).
    \end{align*}
    Hence, to complete the proof, it suffices to show that $\P(\ca^c\cup\cb^c)=o(1)$. This is implied by the following two lemmas, whose proofs are deferred to Appendix~\ref{app:helper-max-upper}.
    
    \begin{lemma} \label{lemma:event-A}
        The probability of event $\ca^c$ satisfies that
        \[
            \P(\ca^c)=\co(\exp(-n^{c_1}))
        \]
        for some positive constant $c_1$.
    \end{lemma}
    
    \begin{lemma} \label{lemma:event-B}
        Let $\ce_1 = \{\hat{y}\neq y^*\}$ denote the event that the \stourmax algorithm returns an incorrect estimate in Line 5 of Algorithm~\ref{alg:noisymax}. The conditional probability of event $\cb^c$ given event $\ce_1^c$ satisfies that 
        \[
            \P(\cb^c\cond \ce_1^c)=\co(\exp(-n^{c_2}))
        \]
        for some positive constant $c_2$.
    \end{lemma}
    
    \noindent Together with Lemma~\ref{lemma:tournament}, Lemmas~\ref{lemma:event-A} and~\ref{lemma:event-B} imply that 
    \[
        \P(\ca^c\cup\cb^c)\le \P(\ca^c)+\P(\cb^c)\le \P(\ca^c)+\P(\cb^c \cond \ce_1^c)+ \P(\ce_1)\le \co(\exp(-n^{c_1}))+\co(\exp(-n^{c_2})) + \delta=o(1),
    \]
    which completes the proof of Lemma~\ref{lemma:number-queries-max}. Along with Lemma~\ref{lemma:error-prob-max}, this implies the result stated in Theorem~\ref{thm:max-upper}.
\end{proof}

\section{Extension to Varying $p$} \label{sec:extension}
In our main results for the \orn problem and the \maxn problem, we established tight achievability and converse results under the assumptions that $\delta=o(1)$ and $p$ is a constant bounded away from $0$ and $1/2$. 
In this section, we maintain the premise of a diminishing error probability, $\delta=o(1)$, but explore scenarios where $p$ converges to either zero or half as $n$ approaches infinity. Within this context, we delve into the two problems while demonstrating that we can still derive matching achievability and converse results—except in the specific instance where $\frac{\log(1/\delta)}{\log(1/p)}=\Theta(1)$. Our subsequent discussion will primarily revolve around the \orn problem, with a parallel argument applicable to the \maxn problem.

\paragraph{$p\rightarrow 1/2$ as $n\rightarrow\infty$} In this case, we can still show that $(1\pm o(1))\frac{n\log{1/\delta}}{\DKL}$ queries in expectation are both necessary and sufficient for achieving a vanishing error probability $\delta$. Firstly, the proof of Theorem~\ref{thm:or-lower} does not use the constant assumption on $p$, so the lower bound $(1- o(1))\frac{n\log{1/\delta}}{\DKL}$ still holds for any $p$. Secondly, at most $(1+ o(1))\frac{n\log{1/\delta}}{\DKL}$ in expectation can still be achieved by the proposed \sno algorithm. To see this, we focus on the proof of Lemma~\ref{lemma:number-queries-or}, where we analyze the expected number of queries of the algorithm. Recall that in this proof, we bounded the number of queries spent on Lines 3 and 11 using Lemmas~\ref{lemma:check-bit} and~\ref{lemma:tournament-or} respectively. These two lemmas hold for any $p$, and in the case of $p\rightarrow 1/2$, it is not hard to see the the number of queries spent on Lines 3 and 11 are $(1+o(1))\frac{n\log(1/\delta)}{\DKL}$ and $o\left(\frac{n\log(1/\delta)}{\DKL}\right)$ respectively. Therefore, the total expected number of queries is at most $(1+ o(1))\frac{n\log{1/\delta}}{\DKL}$.

\paragraph{$p\rightarrow 0$ as $n\rightarrow\infty$} In the case of vanishing $p$, we will discuss three sub-cases separately. 

Firstly, suppose that $\frac{\log(1/\delta)}{\log(1/p)}=\omega(1)$. In this case, we still have that $(1\pm o(1))\frac{n\log{1/\delta}}{\DKL}$ queries in expectation are both necessary and sufficient for achieving a vanishing error probability $\delta$. The lower bound follows because Theorem~\ref{thm:or-lower} holds for any $p$. The upper bounded is achieved by the proposed  \sno algorithm for the same reason as the case $p\rightarrow 1/2$.

Secondly, suppose that $\frac{\log(1/\delta)}{\log(1/p)}=o(1)$. In this case, at least $n$ queries are necessary for achieving a vanishing error probability $\delta$, while the proposed \sno algorithm uses at most $(1+o(1))n$ queries in expectation. To see the lower bound, we notice that trivially each of the $n$ bits must be queried at least once to achieve any error probability $\delta<1/2$. To see the upper bound, we use Lemmas~\ref{lemma:check-bit} and~\ref{lemma:tournament-or} to bound the number of queries spent on Lines 3 and 11 of the \sno algorithm respectively.  By Lemma~\ref{lemma:check-bit}, we know that the expected number of queries spent on Line 3 is at most
\[
     \frac{n}{1-2p}\left\lceil\frac{\log\frac{1-\delta}{\delta}}{\log\frac{1-p}{p}}\right\rceil=\frac{n}{1-2p}=(1+o(1))n.
\]
By Lemma~\ref{lemma:tournament-or}, we know that the expected number of queries spent on Line 11 is at most
\[
     C\left(\frac{\E[|\mathcal{S}|]}{1-H(p)}+\frac{\E[|\mathcal{S}|]\log(1/\delta)}{\DKL}\right)=O(\E[|\mathcal{S}|])=o(n),
\]
where $\mathcal{S}$ denote the set of elements in ${\bf y}$, and the first equality follows because $1-H(p)=\Theta(1)$ and $\frac{\log(1/\delta)}{\DKL}=\Theta(\frac{\log(1/\delta)}{\log(1/p)})$. Therefore the expected number of queries made by \sno algorithm is at most $(1+o(1))n$.

Finally, suppose $\frac{\log(1/\delta)}{\log(1/p)}=\Theta(1)$. In this case, the expected number of queries for achieving error probability $\delta$ is at least $\max\{n,(1-o(1))\frac{n\log(1/\delta)}{\DKL}\}=\Theta(n)$, while the expected number of queries of the proposed \sno algorithm is at most $\frac{(1+o(1))n}{1-2p}\left\lceil\frac{\log\frac{1-\delta}{\delta}}{\log\frac{1-p}{p}}\right\rceil=\Theta(n)$. So our upper and lower bounds are tight up to constant, but the exact constant is still unknown. This untightness is because when $\frac{\log(1/\delta)}{\log(1/p)}$ is a constant, the $\left\lceil\frac{\log\frac{1-\delta}{\delta}}{\log\frac{1-p}{p}}\right\rceil$ term in Lemma~\ref{lemma:check-bit} cannot be upper bounded by $(1+o(1))\frac{\log\frac{1-\delta}{\delta}}{\log\frac{1-p}{p}}$. How to get the exact constant in this case is still an open problem.

\section*{Acknowledgements}
This work was supported in part by the NSERC Discovery Grant No. RGPIN-2019-05448, the NSERC Collaborative Research and Development Grant CRDPJ 54367619, NSF Grant IIS-1901252 and NSF Grant CCF-1909499.

\bibliography{ref}

\appendices
\section{Useful Lemmas}\label{app:lemma}
Here, we introduce several important lemmas.

\begin{lemma}[Le Cam's Two Point Lemma, see e.g.~\citep{yu1997assouad}]\label{lem:lecam}
    Let $\Theta$ and $\hat{\Theta}$ be arbitrary parameter spaces, and let $L: \Theta \times \hat{\Theta} \to  \mathbb{R}$ be any loss function. Suppose that $\theta_0,\theta_1\in\Theta$ satisfy the following separation condition:
    \begin{align*}
    \forall {\hat{\theta}\in \hat{\Theta}}, L(\theta_0, \hat{\theta})+L(\theta_1, \hat{\theta})\geq \Delta>0.
    \end{align*}
    Then we have
    \begin{align*}
        \inf_{\hat{\theta}} \, \sup_\theta \, \mathbb{E}_\theta[L(\theta,\hat{\theta})]\geq \frac{\Delta}{2}\left(1-\mathsf{TV}(\mathbb{P}_{\theta_0},\mathbb{P}_{\theta_1})\right),
    \end{align*}
    where $\mathbb{P}_{\theta_i}$, $i\in \{0,1\}$, denotes the sample distribution under the parameter $\theta_i$.
\end{lemma}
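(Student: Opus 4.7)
The plan is to follow the standard reduction from estimation to a two-point testing lower bound. First I would pass from the supremum over $\Theta$ to an average over $\{\theta_0, \theta_1\}$ via
\[
\sup_{\theta \in \Theta} \mathbb{E}_\theta[L(\theta, \hat\theta)] \;\geq\; \tfrac{1}{2}\Bigl(\mathbb{E}_{\theta_0}[L(\theta_0, \hat\theta)] + \mathbb{E}_{\theta_1}[L(\theta_1, \hat\theta)]\Bigr),
\]
which reduces the problem to lower bounding the average risk under $\theta_0$ and $\theta_1$ for an arbitrary (possibly randomized) decision rule $\hat\theta$.

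Next I would represent the two expectations as integrals of $L(\theta_i,\hat\theta(x))$ against $p_i(x)\,d\mu(x)$, where $p_0, p_1$ are densities of $\mathbb{P}_{\theta_0}, \mathbb{P}_{\theta_1}$ with respect to a common dominating measure $\mu$, and bound the sum by the integral against the pointwise minimum density:
\[
\mathbb{E}_{\theta_0}[L(\theta_0, \hat\theta(X))] + \mathbb{E}_{\theta_1}[L(\theta_1, \hat\theta(X))] \;\geq\; \int \min\bigl(p_0(x), p_1(x)\bigr) \bigl[L(\theta_0, \hat\theta(x)) + L(\theta_1, \hat\theta(x))\bigr] \, d\mu(x).
\]
This step tacitly uses nonnegativity of $L$: since $\min(p_0,p_1)\leq p_i$ pointwise, replacing $p_i(x)$ by $\min(p_0(x),p_1(x))$ inside each term $L(\theta_i,\hat\theta(x))p_i(x)$ can only shrink it.

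At this point the separation hypothesis $L(\theta_0, \hat\theta) + L(\theta_1, \hat\theta) \geq \Delta$ plugs directly into the bracket, producing
\[
\Delta \int \min\bigl(p_0(x), p_1(x)\bigr) \, d\mu(x) \;=\; \Delta\bigl(1 - \mathsf{TV}(\mathbb{P}_{\theta_0}, \mathbb{P}_{\theta_1})\bigr),
\]
where the final equality is the standard variational identity for total variation. Combining with the averaging step and taking the infimum over $\hat\theta$ yields the claimed $\tfrac{\Delta}{2}(1 - \mathsf{TV}(\mathbb{P}_{\theta_0}, \mathbb{P}_{\theta_1}))$. The only mildly delicate point, and the one I would handle most carefully, is dealing with a randomized $\hat\theta$: I would first condition on its internal randomness so that $\hat\theta$ becomes a measurable function of the data $X$, apply the pointwise argument above, and then average over that extra randomness at the very end — everything else is a one-line manipulation.
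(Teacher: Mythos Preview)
Your argument is the standard textbook proof of Le Cam's two-point lemma and is correct, including your explicit flag that the step replacing $p_i$ by $\min(p_0,p_1)$ requires $L\geq 0$ (which, although the lemma statement says $L:\Theta\times\hat\Theta\to\mathbb{R}$, is the usual implicit convention for a ``loss function'' and is satisfied in every application the paper makes of the lemma).

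As for comparison with the paper: there is nothing to compare. The paper does not prove Lemma~\ref{lem:lecam}; it merely states it in Appendix~\ref{app:lemma} with a citation to \citep{yu1997assouad} and then invokes it in the proofs of Theorems~\ref{thm:or-lower} and~\ref{thm:max-lower}. Your write-up therefore supplies a self-contained proof where the paper chose to cite the literature, and it follows exactly the route one finds in the cited reference.
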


\begin{lemma}[Bretagnolle–Huber inequality
 (\citep{bretagnolle79}, and Lemma 2.6 in~\citep{tsybakov2004introduction})] \label{lem:bh}
    For any two distributions $\mathbb{P}_1,\mathbb{P}_2$, one has
    \begin{align*}
        \mathsf{TV}(\mathbb{P}_1, \mathbb{P}_2)\leq 1-\frac{1}{2} \exp\left(-D_{\mathsf{KL}}(\mathbb{P}_1, \mathbb{P}_2)\right).
    \end{align*}
\end{lemma}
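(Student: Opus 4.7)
The plan is to derive the inequality via an auxiliary quantity, the Bhattacharyya affinity $B(\mathbb{P}_1, \mathbb{P}_2) \triangleq \int \sqrt{p_1 p_2}\, d\mu$, where $p_1, p_2$ are densities of $\mathbb{P}_1, \mathbb{P}_2$ with respect to any common dominating measure (if $D_{\mathsf{KL}}(\mathbb{P}_1, \mathbb{P}_2) = \infty$ the bound is vacuous, so absolute continuity may be assumed). The strategy is to sandwich $B$ from above by a quantity involving $\mathsf{TV}(\mathbb{P}_1, \mathbb{P}_2)$ and from below by a quantity involving $D_{\mathsf{KL}}(\mathbb{P}_1, \mathbb{P}_2)$, and then chain the two bounds.

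For the upper bound on $B^2$, the key observation is the pointwise identity $p_1 p_2 = \min(p_1, p_2) \cdot \max(p_1, p_2)$, so that $\sqrt{p_1 p_2} = \sqrt{\min \cdot \max}$. Applying Cauchy--Schwarz to $\sqrt{\min(p_1,p_2)}$ and $\sqrt{\max(p_1,p_2)}$ in $L^2(\mu)$ yields
\[
B(\mathbb{P}_1, \mathbb{P}_2)^2 \leq \int \min(p_1, p_2)\, d\mu \cdot \int \max(p_1, p_2)\, d\mu = (1 - \mathsf{TV})(1 + \mathsf{TV}) = 1 - \mathsf{TV}^2,
\]
using the standard identities $\int \min(p_1,p_2)\, d\mu = 1 - \mathsf{TV}(\mathbb{P}_1, \mathbb{P}_2)$ and $\int \max(p_1,p_2)\, d\mu = 1 + \mathsf{TV}(\mathbb{P}_1, \mathbb{P}_2)$. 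For the lower bound on $B$, I would rewrite $B(\mathbb{P}_1, \mathbb{P}_2) = \mathbb{E}_{\mathbb{P}_1}[\sqrt{p_2/p_1}]$ and apply Jensen's inequality to the concave function $\log$, obtaining $\log B \geq \tfrac{1}{2}\, \mathbb{E}_{\mathbb{P}_1}[\log(p_2/p_1)] = -\tfrac{1}{2}\, D_{\mathsf{KL}}(\mathbb{P}_1, \mathbb{P}_2)$, hence $B \geq \exp(-D_{\mathsf{KL}}(\mathbb{P}_1, \mathbb{P}_2)/2)$.

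Combining the two bounds gives $(1 - \mathsf{TV})(1 + \mathsf{TV}) \geq \exp(-D_{\mathsf{KL}})$, and the trivial estimate $1 + \mathsf{TV} \leq 2$ yields $1 - \mathsf{TV} \geq \tfrac{1}{2}\exp(-D_{\mathsf{KL}})$, which rearranges to the claimed inequality. I do not expect any real obstacle: the only non-obvious step is identifying the Bhattacharyya affinity as the bridge between $\mathsf{TV}$ and $D_{\mathsf{KL}}$; once chosen, the two ingredients (Cauchy--Schwarz and Jensen) are routine. A minor observation is that the same argument actually produces the sharper bound $\mathsf{TV} \leq \sqrt{1 - \exp(-D_{\mathsf{KL}})}$, but the factor-of-two loose exponential form stated in the lemma is the one directly plugged into the lower-bound arguments of Theorems~\ref{thm:or-lower} and~\ref{thm:max-lower}.
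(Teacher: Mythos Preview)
Your argument is correct and is in fact the standard proof of the Bretagnolle--Huber inequality (essentially the one found in Tsybakov's book, which the paper cites). Note, however, that the paper does not supply its own proof of this lemma: it is listed in Appendix~\ref{app:lemma} as a known auxiliary result with references to \citep{bretagnolle79} and \citep{tsybakov2004introduction}, and is simply invoked in the proofs of Theorems~\ref{thm:or-lower} and~\ref{thm:max-lower}. So there is nothing to compare against; your write-up would serve well as a self-contained justification should one be desired.
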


\begin{lemma}[Divergence Decomposition (\citep{auer1995gambling}, and Lemma 15.1 in~\citet{lattimore2020bandit})]\label{lem:div}
Let $M_i$ be the  random variable denoting the number of times experiment $i\in[n]$ is performed under some policy $\pi$, then for two distributions $\mathbb{P}^\pi$, $\mathbb{Q}^\pi$ under policy $\pi$,
\begin{align*}
    D_{\mathsf{KL}}(\mathbb{P}^\pi, \mathbb{Q}^\pi) = \sum_{i\in[n]}\mathbb{E}_{\mathbb{P}^\pi}[M_i] D_{\mathsf{KL}}(\mathbb{P}_i^\pi, \mathbb{Q}_i^\pi).
\end{align*}
\end{lemma}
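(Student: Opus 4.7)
The plan is to prove the identity by applying the chain rule for KL divergence to the joint distribution over the sequence of (experiment, observation) pairs induced by the policy $\pi$, and then reorganizing the resulting sum over time steps into a sum over experiments. Concretely, let $I_t$ denote the index of the experiment chosen at round $t$ and $Y_t$ the observation returned; let $H_t = (I_1, Y_1, \ldots, I_t, Y_t)$ denote the history through round $t$, and let $\tau$ be the (possibly random) stopping time of the policy. The joint law of $H_\tau$ under $\mathbb{P}^\pi$ (resp.\ $\mathbb{Q}^\pi$) factors as a product of conditionals: the conditional law of $I_t$ given $H_{t-1}$ is determined by the policy alone and is therefore identical under the two measures, while the conditional law of $Y_t$ given $(H_{t-1}, I_t)$ equals $\mathbb{P}_{I_t}^\pi$ under $\mathbb{P}^\pi$ and $\mathbb{Q}_{I_t}^\pi$ under $\mathbb{Q}^\pi$.

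The first key step is to write
\[
D_{\mathsf{KL}}(\mathbb{P}^\pi, \mathbb{Q}^\pi) = \mathbb{E}_{\mathbb{P}^\pi}\!\left[ \sum_{t=1}^{\tau} \Bigl( D_{\mathsf{KL}}\!\bigl(\mathbb{P}^\pi(I_t \mid H_{t-1}) \,\|\, \mathbb{Q}^\pi(I_t \mid H_{t-1})\bigr) + D_{\mathsf{KL}}\!\bigl(\mathbb{P}^\pi(Y_t \mid H_{t-1}, I_t) \,\|\, \mathbb{Q}^\pi(Y_t \mid H_{t-1}, I_t)\bigr) \Bigr) \right],
\]
which is the chain rule applied stepwise. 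The first inner KL vanishes because the two measures agree on the policy's action distribution. The second equals $D_{\mathsf{KL}}(\mathbb{P}_{I_t}^\pi \,\|\, \mathbb{Q}_{I_t}^\pi)$, a quantity depending only on the chosen experiment index. Hence
\[
D_{\mathsf{KL}}(\mathbb{P}^\pi, \mathbb{Q}^\pi) = \mathbb{E}_{\mathbb{P}^\pi}\!\left[ \sum_{t=1}^{\tau} D_{\mathsf{KL}}(\mathbb{P}_{I_t}^\pi \,\|\, \mathbb{Q}_{I_t}^\pi) \right].
\]

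The second key step is to swap the time sum for a sum over experiments. Writing $\sum_{t=1}^{\tau} D_{\mathsf{KL}}(\mathbb{P}_{I_t}^\pi \,\|\, \mathbb{Q}_{I_t}^\pi) = \sum_{i \in [n]} \sum_{t=1}^{\tau} \mathbbm{1}\{I_t = i\} \, D_{\mathsf{KL}}(\mathbb{P}_i^\pi \,\|\, \mathbb{Q}_i^\pi)$ and noting that $\sum_{t=1}^{\tau} \mathbbm{1}\{I_t = i\} = M_i$ by definition, linearity of expectation yields the claimed identity $D_{\mathsf{KL}}(\mathbb{P}^\pi, \mathbb{Q}^\pi) = \sum_{i \in [n]} \mathbb{E}_{\mathbb{P}^\pi}[M_i] \, D_{\mathsf{KL}}(\mathbb{P}_i^\pi \,\|\, \mathbb{Q}_i^\pi)$.

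The main subtlety, and the step I would be most careful about, is the handling of the random stopping time $\tau$. For a deterministic horizon the chain rule is a direct application of the standard factorization of KL divergence over a product measure; for a random $\tau$ that is a stopping time with respect to the filtration generated by $H_t$, one either invokes an optional stopping argument or, equivalently, pads the process to a deterministic upper bound (absorbing state after $\tau$) so that post-$\tau$ rounds contribute zero to both the time sum and the counts $M_i$. Either route gives the identity cleanly, and the rest of the argument is just bookkeeping.
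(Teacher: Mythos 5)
Your proof is correct and is essentially the standard chain-rule argument behind this lemma: the paper itself does not prove it but cites \citet{auer1995gambling} and Lemma 15.1 of \citet{lattimore2020bandit}, whose proof proceeds exactly as you do, factoring the law of the history, noting the action-selection conditionals cancel, and regrouping the per-round KL terms into the counts $M_i$. Your explicit flagging of the random stopping time is apt, since the paper's querying strategies have variable stopping times and the fixed-horizon statement needs precisely the padding/optional-stopping extension you describe (as in the best-arm-identification literature), so no gap remains.
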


\begin{lemma}[Chernoff bound~\citep{mitzenmacher_upfal_2005}] \label{lemma:chernoff}
    Let $Z\sim \mathrm{Binom}(n,p)$. For all $0 < \delta < 1$ and $\epsilon > 0$, it holds that
    \begin{equation*}
        \P\left(Z\ge (1+\delta)np\right)\le \exp\left(-\frac{\delta^2np}{3}\right),
    \end{equation*}
    and
    \[
        \P\left(Z\ge n(p+\epsilon)\right)\le \exp\big( -nD_{\mathsf{KL}}(p+\epsilon \| p)\big).
    \]
\end{lemma}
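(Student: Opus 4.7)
The plan is to establish both tail bounds via the classical Chernoff--Cram\'er method: apply Markov's inequality to the exponentially tilted variable $e^{tZ}$ for $t\ge 0$ and then optimize over $t$. The only ingredients needed are the moment generating function of a Binomial and two elementary one-variable inequalities, so the argument is short and self-contained.

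First I would write $Z = \sum_{i=1}^n X_i$ with $X_i$ i.i.d.\ $\mathrm{Bern}(p)$, so that $\E[e^{tZ}] = (1-p+pe^t)^n$. For the multiplicative bound, I would use the elementary estimate $1-p+pe^t \le \exp\bigl(p(e^t-1)\bigr)$ to get $\E[e^{tZ}] \le \exp(np(e^t-1))$, and Markov then gives
\[
\P(Z \ge (1+\delta)np) \le \exp\bigl(np(e^t - 1) - t(1+\delta)np\bigr).
\]
Choosing $t = \log(1+\delta)$ reduces the right-hand side to $\exp\bigl(np[\delta - (1+\delta)\log(1+\delta)]\bigr)$. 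The elementary inequality $(1+\delta)\log(1+\delta) - \delta \ge \delta^2/3$ on $(0,1)$ then yields the claimed $\exp(-\delta^2 np/3)$.

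For the additive (KL) form, I would instead keep the exact MGF and not use the $1-p+pe^t \le e^{p(e^t-1)}$ relaxation. Markov's inequality gives, for any $q \in [p,1]$,
\[
\P(Z \ge nq) \le \inf_{t \ge 0} \exp\bigl(n[\log(1-p+pe^t) - tq]\bigr).
\]
Setting the derivative in $t$ of the exponent to zero produces the saddle point $e^{t^\star} = \frac{q(1-p)}{(1-q)p}$. Substituting this back into the exponent and simplifying algebraically collapses it to exactly $-\bigl[q\log(q/p) + (1-q)\log((1-q)/(1-p))\bigr] = -D_{\mathsf{KL}}(q\|p)$. Taking $q = p + \epsilon$ finishes the proof.

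No step is really hard; the only bookkeeping worth noting is (i) verifying $(1+\delta)\log(1+\delta) \ge \delta + \delta^2/3$ on $(0,1)$, which I would do by defining $g(\delta) := (1+\delta)\log(1+\delta) - \delta - \delta^2/3$, checking $g(0) = g'(0) = 0$, and then using the sign pattern of $g''(\delta) = 1/(1+\delta) - 2/3$ to confirm $g' > 0$ on $(0,1]$, so $g \ge 0$ there; and (ii) the algebraic simplification of the saddle-point exponent into $-D_{\mathsf{KL}}(q\|p)$, which is routine once one expands the logs. Since this lemma is quoted verbatim from Mitzenmacher--Upfal, I expect the author to give no proof at all and rely entirely on the citation.
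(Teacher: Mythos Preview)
Your proof is correct and follows the standard Chernoff--Cram\'er argument; as you anticipated, the paper gives no proof at all for this lemma and simply cites Mitzenmacher--Upfal.
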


\section{Helper Lemmas for Theorem~\ref{thm:max-upper}} \label{app:helper-max-upper}
In this appendix, we prove Lemmas~\ref{lemma:event-A} and~\ref{lemma:event-B}, which were used to complete the proof of Theorem~\ref{thm:max-upper}.

\begin{proof}[Proof of Lemma~\ref{lemma:event-A}]
    Notice that
    \begin{align*}
        \P(\ca^c)&=\P\left(\mathrm{Binom}\left(n,\frac{1}{\log n}\right)> \frac{n}{\log n}+n^{3/4}\right)\\
        &\stackrel{(a)}{\le} \exp\left(-\frac{n^{1/2}\log n}{3}\right),
    \end{align*}
    where $(a)$ follows from the multiplicative Chernoff bound (Lemma~\ref{lemma:chernoff} in Appendix~\ref{app:lemma}). Hence, $\P(\ca^c) = \co(\exp(-n^{c_1}))$ for any constant $c_1 < 1/2$.
\end{proof}

\begin{proof}[Proof of Lemma~\ref{lemma:event-B}]
    Notice that on the event $\ce_1^c$, we have that $\hat{y}=y^*$. Define 
    \begin{align*}
        \cx^+ &\triangleq \{x\in \cs^c: x>y^*\},\\
        \cx^- &\triangleq \{x\in \cs^c: x<y^*\}.
    \end{align*}
    Notice that there are two types of elements in the set $\ct$. The first type consists of the elements $x\in \cx^+$ appended into $\ct$ because \snc$(y^*,x,\delta)$ correctly outputs true. The other type consists of the elements $x\in \cx^-$ appended into $\ct$ because \snc$(y^*,x,\delta)$ incorrectly returns true. By Lemma~\ref{lemma:noisy-compare}, the \snc algorithm returns an incorrect estimate with probability at most $\delta$. Thus, $|\ct|$ is statistically dominated by the random variable $|\cx^+|+\mathrm{Binom}(|\cx^-|,\delta)$. Next, we derive large deviation bounds for $|\cx^+|$ and $\mathrm{Binom}(|\cx^-|,\delta)$. First, notice that
    \begin{align*}
        \P(|\cx^+|\ge \sqrt{n})&\stackrel{(a)}{=}\left(1-\frac{1}{\log n}\right)^{\sqrt{n}}\\
        &\stackrel{(b)}{\le} \exp\left(-\frac{\sqrt{n}}{\log n}\right),
    \end{align*}
    where (a) follows because the event $\{|\cx^+|\ge \sqrt{n}\}$ is equivalent to the event that the largest $\sqrt{n}$ elements are not sampled into $\cs$, and (b) follows by $1+x\le \exp(x)$. Next, we have that
    \begin{align*}
        \P(\mathrm{Binom}(|\cx^-|,\delta)\ge \delta n+n^{3/4})&\le \P(\mathrm{Binom}(n,\delta)\ge \delta n+n^{3/4})\\
        &\stackrel{(c)}{\le}\exp\left(-\frac{\sqrt{n}}{3}\right),
    \end{align*}
    where $(c)$ follows by the multiplicative Chernoff bound (Lemma~\ref{lemma:chernoff} in Appendix~\ref{app:lemma}). Finally, the proof is completed by realizing that
    \begin{align*}
        \P(\cb^c \cond \ce_1^c)&\le \P(|\cx^+|+\mathrm{Binom}(|\cx^-|,\delta)\ge \sqrt{n}+\delta n+n^{3/4})\\
        &\le \P\left( \{|\cx^+|\ge \sqrt{n}\} \cup \{\mathrm{Binom}(|\cx^-|,\delta)\ge \delta n+n^{3/4}\}\right)\\
        &\le \P(|\cx^+|\ge \sqrt{n})+\P(\mathrm{Binom}(|\cx^-|,\delta)\ge \delta n+n^{3/4})\\
        &\le \exp\left(-\frac{\sqrt{n}}{\log n}\right)+\exp\left(-\frac{\sqrt{n}}{3}\right),
    \end{align*}
    which implies that $\P(\cb^c \cond \ce_1^c) = \co(\exp(-n^{c_2}))$ for any constant $c_2 < 1/2$.
\end{proof}

\end{document}